\newcommand{\bi}{\begin{itemize}}
\newcommand{\ei}{\end{itemize}}
\newcommand{\beq}{\begin{equation}}
\newcommand{\eeq}{\end{equation}}
\newcommand{\bqn}{\begin{eqnarray*}}
\newcommand{\eqn}{\end{eqnarray*}}
\newcommand{\ba}{\begin{array}}
\newcommand{\ea}{\end{array}}
\newcommand{\bs}{\begin{small}}
\newcommand{\es}{\end{small}}
\newcommand{\nn}{\nonumber}
\newtheorem{theorem}{Theorem}[section]
\newtheorem{proposition}[theorem]{Proposition}
\newenvironment{proof}[1][Proof]{\begin{trivlist}
\item[\hskip \labelsep {\bfseries #1}]}{\end{trivlist}}
\newcommand{\qed}{\nobreak \ifvmode \relax \else
      \ifdim\lastskip<1.5em \hskip-\lastskip
      \hskip1.5em plus0em minus0.5em \fi \nobreak
      \vrule height0.75em width0.5em depth0.25em\fi}
\def\defeq{{\stackrel{\Delta}{=}}}
\DeclareMathOperator*{\argmin}{arg\,min}
\begin{document}
\title{Distributed Inference in Tree Networks using Coding Theory}
\author{Bhavya~Kailkhura\thanks{The authors are with the Department of Electrical Engineering and Computer Science, Syracuse University, Syracuse, NY USA. 

(email: bkailkhu@syr.edu; avempaty@syr.edu; varshney@syr.edu)

}, Aditya~Vempaty, and Pramod~K.~Varshney}
\date{}
\maketitle

\begin{abstract}
In this paper, we consider the problem of distributed inference in tree based networks. In the framework considered in this paper, distributed nodes make a 1-bit local decision regarding a phenomenon before sending it to the fusion center (FC) via intermediate nodes. We propose the use of coding theory based techniques to solve this distributed inference problem in such structures. Data is progressively compressed as it moves towards the FC. The FC makes the global inference after receiving data from intermediate nodes. Data fusion at nodes as well as at the FC is implemented via error correcting codes. In this context, we analyze the performance for a given code matrix and also design the optimal code matrices at every level of the tree. We address the problems of distributed classification and distributed estimation separately and develop schemes to perform these tasks in tree networks. The proposed schemes are of practical significance due to their simple structure. We study the asymptotic inference performance of our schemes for two different classes of tree networks: fixed height tree networks, and fixed degree tree networks. We show that the proposed schemes are asymptotically optimal under certain conditions.

\end{abstract}
\begin{keywords}
distributed classification, distributed estimation, tree networks, error correcting codes, wireless sensor networks, information fusion
\end{keywords}
\section{Introduction}
Wireless Sensor Networks (WSNs) have attracted much interest in recent years \cite{akyildiz_commag02}. Detection, classification, or estimation of certain events, targets, or phenomena, in a region of interest, is an important application of sensor networks. Different aspects of this problem have been investigated by the research community over the last few decades~\cite{Viswanathan,veer} mostly in the context of the parallel network topology. In such a framework, due to power and bandwidth constraints, each node, instead of sending its raw data, sends quantized data to a central observer or fusion center (FC). The FC combines these local nodes' data to make a global inference~\cite{Varshney:book}.
Given a parallel topology, the objective is to find efficient quantization rules for the nodes and efficient inference rule for the FC, which maximize the global performance at the FC.
Note that, in general, the problem of designing optimal inference rules is computationally expensive (NP-hard)~\cite{JNT}. 

For example, in a distributed detection framework, under the assumption of conditional independence, the optimal decision rule for each node takes the form of a likelihood ratio test, with a suitably chosen threshold. However, finding the optimal thresholds requires the solution of a system of non-linear equations and, therefore, the problem is difficult to solve, even for the network of moderate size. The analysis of optimal detection system performance is tractable only in asymptotic regime. 
It has been shown that the use of identical thresholds is asymptotically optimal~\cite{tsit}.
Under the assumption of identical thresholds, several authors have considered the problem of designing optimal decision rules in the past~\cite{Zhang,Shi,Kailkhura}.

In contrast to the distributed detection problem, in a classification problem, each decision is usually represented by ${\log}_2 M$ information
bits, where $M$ is the number of classes to be distinguished.
The problem of classification using ${\log}_2 M$ information bits has been studied for parallel topology \cite{BaekB95}. Due to bandwidth constraints, it is desirable that the local node decisions are sent to the FC with as few bits as possible. To overcome this problem, distributed classification has been proposed in which the local nodes make 1-bit (rather than ${\log}_2 M$ bit) local decisions and send them to the FC~\cite{Zhangm,Zhu2004,Wang_jsac05}. The FC then uses the local decisions collectively and makes a global inference about the underlying phenomenon. 

In~\cite{DLI,xiaohong}, the authors consider the problem of parameter estimation in a parallel topology.
Received signal strength based methods have been proposed which employ least-squares or maximum likelihood (ML) based parameter estimation techniques. These techniques are not suitable for power and bandwidth constrained networks. To overcome these drawbacks, distributed parameter estimation using quantized measurements has been addressed in \cite{RibeiroG2006a,RibeiroG2006b,NiuV2006}. Similar to the problem of distributed detection, the system design issues of distributed estimation have also been addressed only in certain scenarios, such as in \cite{VempatyCV2013}, where it has been shown that identical quantizers are optimal under certain conditions. To simplify things, in~\cite{adierror}, coding theory based iterative schemes were proposed for target localization using parallel topology where at every iteration, the FC solves an $M$-ary hypothesis testing problem and decides the region of interest for the next iteration.

Even though the parallel topology has received significant attention, there are many practical situations where parallel topology cannot be implemented due to several factors, such as, the FC being outside the communication range of the nodes and limited energy budget of the nodes~\cite{Lin}. 
In such cases, a multi-hop network is employed, where nodes are organized hierarchically into multiple levels (tree networks).
Some examples of tree networks include wireless sensor and military communication networks.
For instance, the IEEE 802.15.4 (Zigbee) specifications \cite{Aliance} and IEEE 802.22b \cite{802_22bDraft}
can support tree based topologies.

There have been limited attempts to address the distributed inference problems in tree networks~\cite{TayTW:IT08,ZhangCPMH13,Kailkhura2013,bhavyaj}. In all but the simplest cases, optimal strategies in tree based networks are difficult to derive. Most of the work on tree networks focuses on person-by-person optimal (PBPO) strategies~\cite{TayTW:IT08,ZhangCPMH13,Kailkhura2013,bhavyaj}. Also, the above works address the problem of distributed detection in tree networks while, to the best of our knowledge, the problem of distributed estimation in tree networks has not received any attention. Due to the complexity of classification and estimation in tree networks as compared to detection, these problems have been left unexplored by researchers.

In this paper, we take a first step to address the distributed inference (classification and estimation) problems in tree networks by developing an analytically tractable framework.
We first consider the distributed classification problem and propose to use coding theory based techniques to solve the problem. We analyze the asymptotic classification performance of our
scheme for two different classes of tree networks: fixed height tree networks, and fixed degree tree networks. We show that the proposed scheme is asymptotically optimal under certain conditions. Building on these results, we extend our scheme to consider the distributed estimation problem and analyze the asymptotic estimation performance of our
scheme. The remainder of the paper is organized as follows. In Section~\ref{sec:prel}, we describe the system architecture and present a brief overview of Distributed Classification Fusion using Error Correcting Codes (DCFECC) scheme~\cite{Wang_jsac05} which serves as a foundation for the schemes presented in this paper. We propose our basic coding scheme for distributed classification in tree networks in Section~\ref{sec:approachclass}. The performance of the proposed scheme in the asymptotic regime is also analyzed. We present some numerical results to gain insights into the solution. We extend this scheme for distributed estimation in tree networks in Section~\ref{sec:approachest} by formulating the estimation problem as a sequence of $M$-ary classification problems. The performance of the proposed scheme in the asymptotic regime is analyzed and some numerical results are presented. We also provide a technique for optimal region splitting for distributed estimation. 
Finally, we conclude our paper in Section~\ref{sec:disc} with some discussion on possible future work.
\section{Preliminaries}
\label{sec:prel}

\subsection{General Network Architecture}
\label{sec:syst}
Consider a perfect tree, $T(K,\;N)$, rooted at the FC (Please see Figure~\ref{syst}). Nodes at level $k$, for $1\leq k\leq K-1$, are referred to as intermediate nodes and nodes at the last level of the tree, i.e., $k=K$, are called the leaf nodes. In a perfect tree, all the intermediate nodes have an equal number of immediate successors and the number of such successors $N$ is referred to as the degree of the tree. 

\begin{figure}[t]
  \centering
    \includegraphics[height=2in, width=!]{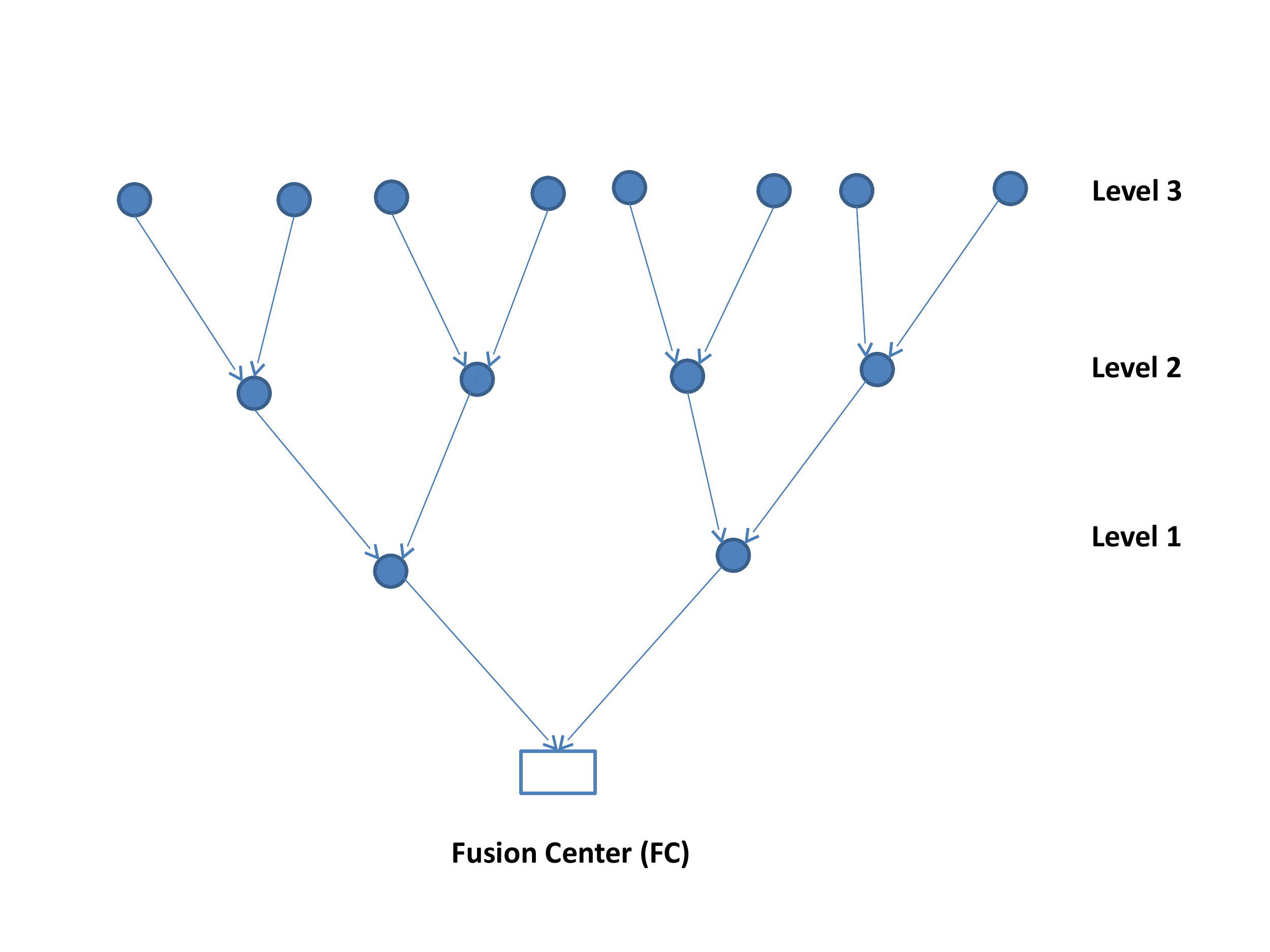}
    \vspace*{-0.3in}
    \caption{A distributed inference system organized as a perfect binary tree; $T(3,\;2)$ is shown as an example.}\label{syst}
\vspace*{-0.1in}
\end{figure}

We assume that the network is designed to infer about a particular phenomenon.
Each node $j$ at level $k$ performs two basic operations: 
\begin{itemize}
\item Depending on the task, sense data regarding the phenomenon and/or collect data from its successors at level $k+1$, denoted by $S^{k+1}(j)$.
\item Compress the data available at node $j$ about the phenomenon and transmit a 1-bit version to its predecessor at level $k-1$, denoted by $P^{k-1}(j)$. 
\end{itemize}

Local observation of node $j$ at level $k$ is denoted as $y_j^k$. Received data vector at node $j$ of level $k$ from its successors $S^{k+1}(j)$ at level $k+1$ is denoted as $\mathbf{v_j^k}\in\{0,1\}^N$. After processing the data at the node according to a processing model (Please see Figure~\ref{syst1}),
every node $j$ at level $k$ sends its one-bit local decision $u_j^k\in\{0,1\}$ to its immediate predecessor. This processing model is designed based on the inference problem considered, i.e., Figure~\ref{syst2} for classification or Figure~\ref{syst3} for estimation. Finally, the FC receives the inference vector $\mathbf{u^1}=(u_{1}^{1},\cdots,u_{N}^{1})\in\{0,1\}^N$ and fuses this data to infer about the underlying phenomenon. In our analysis, we consider error-free links in the network. However, we do provide some simulation results for the case where there are erroneous links, to examine the robustness of the proposed schemes.

Given a tree network, our objective is to find the appropriate processing scheme for nodes at all levels depending on the inference problem considered. Next, we describe the distributed classification fusion using error-correcting codes scheme (originally proposed for parallel topology in \cite{Wang_jsac05}) which serves as the mathematical basis for the ideas proposed in this paper.

\begin{figure}[t]
  \centering
    \includegraphics[height=2in, width=!]{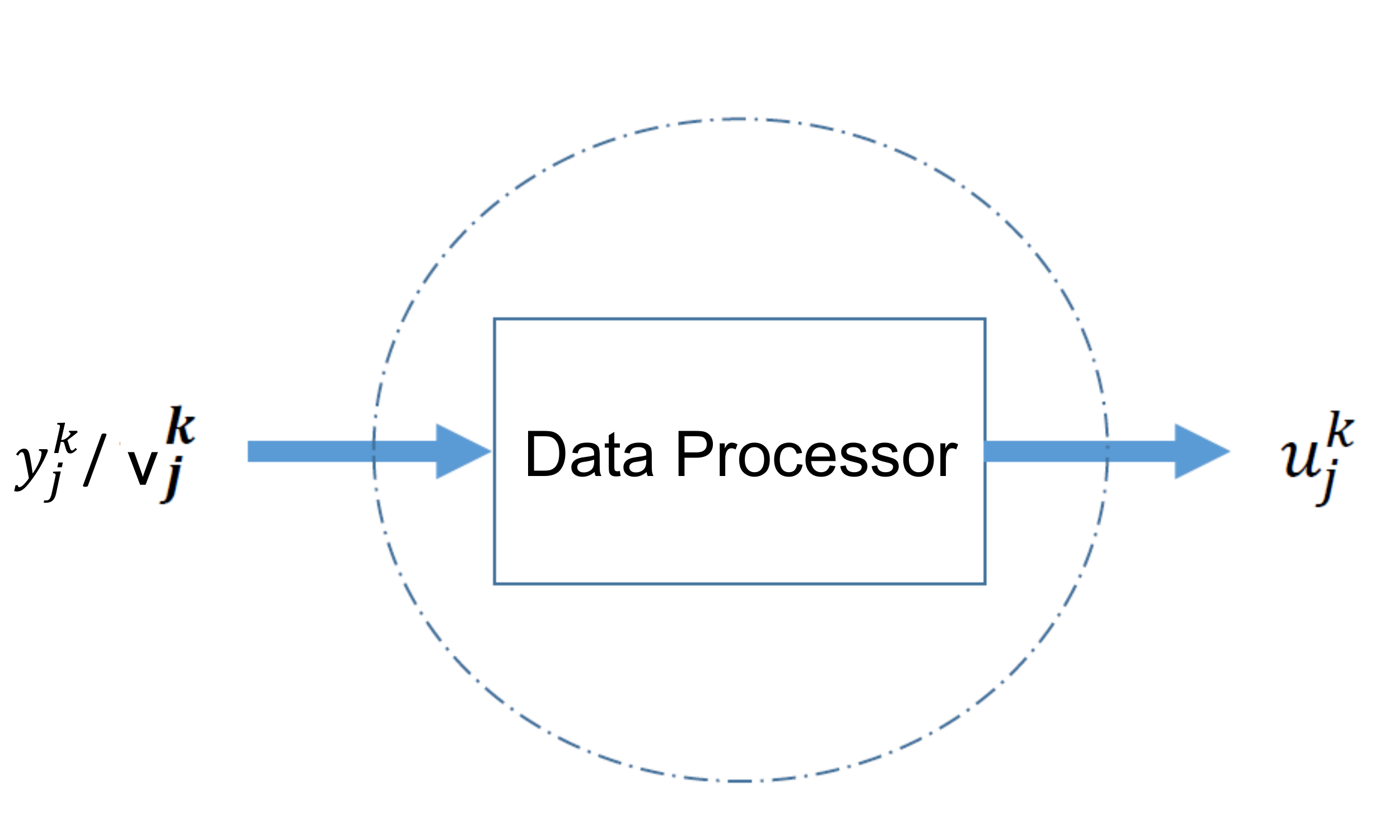}
    \caption{Data processing for distributed inference at node $j$ at level $k$. Here $y_j^k$ and $\mathbf{v_j^k}$ are the inputs and $u_j^k\in\{0,1\}$ is the output of the process at node $j$.}\label{syst1}
\end{figure}

\subsection{Distributed Classification Fusion using Error-Correcting Codes (DCFECC)}
\label{sec:DCFECC}
In~\cite{Wang_jsac05}, the authors proposed the DCFECC scheme for $M$-ary distributed classification using binary quantized local data for a parallel topology network. The idea behind the DCFECC scheme is to select a binary code matrix ${C}$ to determine the local decision rules at the nodes, and to perform fault-tolerant fusion at the FC. For a network with $N$ nodes trying to distinguish among $M$ hypotheses, the code matrix ${C}$ is an $M\times N$ binary matrix. Each row of ${C}$ corresponds to one of the $M$ possible hypotheses $H_1,\cdots,H_M$ and each column represents the binary decision rule of the corresponding node. Given this code matrix, the node $j$ sends its binary decision $u_j\in\{0,1\}$ to the FC. After receiving the binary decisions $\mathbf{u}=(u_1,\cdots,u_N)$ from local nodes, the final classification decision is made at the FC using minimum Hamming distance based fusion given by:

Decide $H_m$ where 
\begin{equation}
\label{eq:hamm}
m=\argmin_{1 \leq l\leq M}{ d_H(\mathbf{u},\mathbf{r}_l)},
\end{equation}
where $d_H(\mathbf{x},\mathbf{y})$ is the Hamming distance between $\mathbf{x}$ and $\mathbf{y}$, and $\mathbf{r}_l=(c_{l1},\cdots,c_{lN})$ is the $l$th row of ${C}$ which corresponds to hypothesis $H_l$.
The tie-break rule is to randomly pick a row of the code matrix ${C}$ from those with the smallest Hamming distance to the received vector $\mathbf{u}$. The performance of the scheme depends on the code matrix ${C}$ since it is used for designing the local decision rules as well as for the fusion rule at the FC. Several approaches to design the matrix $C$, e.g., based on simulated annealing and cyclic column replacement, were presented in~\cite{Wang_jsac05}. 

For example, consider the code matrix used by a parallel network of $N=7$ nodes performing an $(M=4)$-ary classification problem   
\[ C= \left[ \begin{array}{ccccccc}
1 & 0 & 0 & 0 & 1 & 0 & 1 \\
0 & 0 & 1 & 0 & 0 & 0 & 0 \\
1 & 0 & 1 & 1 & 0 & 1 & 0 \\
0 & 1 & 1 & 1 & 1 & 1 & 1  \end{array} \right].\]
When the true hypothesis is $H_1$ corresponding to the first row, all the nodes are supposed to send the first element of their column. However, due to imperfect observations at the nodes, consider the case when the FC receives the vector $[1 1 1 0 1 0 1]$. The FC evaluates the Hamming distance between this received vector and each of the rows resulting in the Hamming distance values $(2,4,5,3)$. Therefore, it decides the hypothesis corresponding to the first row, $H_1$, as the true hypothesis.

\section{Distributed Classification in Tree Networks}
\label{sec:approachclass}
In this section, we consider the problem of distributed classification in tree networks. 
We model the classification problem as an $M$-ary hypotheses testing problem. Let $H_l$, where $l=1,\cdots,M$ and $M\geq 2$, denote the $M$ hypotheses\footnote{In order to distinguish among the $M$ hypotheses using binary decisions, we assume that $N\geq{\log}_2 M$.}. The \textit{a priori} probabilities of these $M$ hypotheses are denoted by $Pr(H_l)=P_l$, for $l=1,\cdots,M$. 

\subsection{Proposed Scheme}
\label{ap:class}
We assume that under each hypothesis $H_l$, every leaf node $j$ acts as a source and makes an independent and identically distributed (i.i.d.) observation $y_j^K$. After processing the observations locally, every leaf node $j$ sends its local decision\footnote{In this context, ``decision" is a binary quantized value determined by the processing model.} $u_{j}^{K}\in \{0,1\}$ according to a transmission mapping $\tau_j^K(\cdot)$ to its immediate predecessor $P^{K-1}(j)$. 
Each intermediate node $j$ at level $k$ receives the decision vector consisting of local decisions made by its immediate successors $S^{k+1}(j)$ at level $k+1$, which can be expressed as $\mathbf{v_j^k}=\mathbf{u^{k+1}}=(u_{1}^{k+1},\cdots,u_{N}^{k+1})$. After fusing this data using fusion rule $f_j^{k}(\cdot)$, this intermediate node $j$ at level $k$ makes a classification decision $y_j^k\in\{1,\cdots,M\}$. Then, it sends a 1-bit version of this decision, $u_{j}^{k}\in \{0,1\}$, according to its transmission mapping $\tau_j^{k}(\cdot)$ to its immediate predecessor $P^{k-1}(j)$. Finally, the FC receives the decision vector $\mathbf{u^1}=(u_{1}^{1},\cdots,u_{N}^{1})$ and fuses this data to decide the underlying hypothesis.
The proposed scheme builds on the DCFECC scheme (Section~\ref{sec:DCFECC}). To summarize, each node $j$ at level $k$, for $1\leq k\leq K-1$, performs two basic operations (Please see Figure~\ref{syst2}): 
\begin{itemize}
\item Collect data from its successors $S^{k+1}(j)$ and fuse their data using fusion rule $f_j^k(\cdot)$ to locally decide the hypothesis, denoted by $y_j^k$.
\item Compress the decision $y_j^k$ at node $j$ about the hypothesis and transmit a 1-bit version $u_j^k$ to its predecessor $P^{k-1}(j)$ using the transmission mapping $\tau_j^k(\cdot)$. 
\end{itemize}

\begin{figure}[t]
  \centering
    \includegraphics[height=1.5in, width=!]{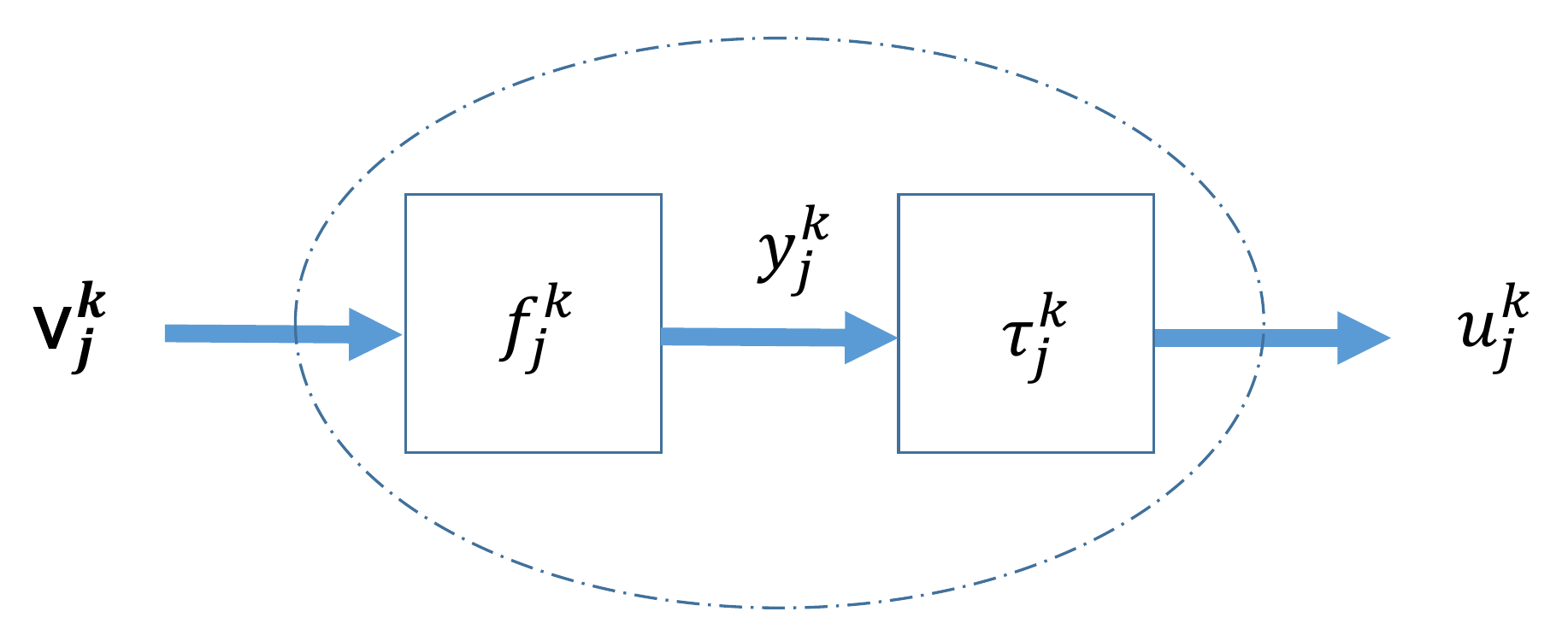}
    \vspace{-0.2in}
    \caption{Data processing for distributed classification at node $j$ at level $1\leq k\leq K-1$. Here $\mathbf{v_j^k} \in \{0,1\}^N$, $y_j^k\in\{1,\cdots,M\}$, and $u_j^k\in\{0,1\}$. Therefore, the mappings are $f_j^k: \{0,1\}^N \to \{1,\cdots,M\}$ and $\tau_j^k: \{1,\cdots,M\} \to \{0,1\}$}\label{syst2}
\end{figure}

For the leaf nodes (level $K$), there are no successors and, therefore, only the second operation needs to be performed. And, for the FC (level `0'), only the first operation needs to be performed. Each of the functions $f_j^k(\cdot)$ and $\tau_j^k(\cdot)$ depend on the code matrix used at level $k$. By appealing to symmetry, we assume that each node at the same level $k$, uses an identical code matrix $C^k$ for transmission to its predecessor and $C^{k+1}$ for fusion of data from its successors. We start with the design of transmission mapping $\tau^K_j(\cdot)$ of the leaf nodes. When the leaf nodes use code matrix $C^K$ for transmission, the probability of misclassification at level $K-1$ is given by \cite{Wang_jsac05}

\begin{equation}
P_e^{K-1}= 
\sum_{\mathbf{i},l}\int_{\mathbf{y^K}}P_lP(u_1^K=i_1|y_1^K)\times\cdots\times P(u_N^K=i_N|y_N^K)p(\mathbf{y^K}|H_l) \psi^{K}_{\mathbf{i},l},\label{eq:leaf}
\end{equation}

where $\mathbf{i}=[i_1,\cdots,i_N]\in\{0,1\}^N$ is a realization of the received codeword $\mathbf{u^K}$, $\mathbf{y^K}=[y^K_1,\cdots,y^K_N]$ are the local observations of leaf nodes, and $\psi^{K}_{\mathbf{i},l}$ is the cost associated with a global decision $H_l$ at level $K-1$ when the received vector from level $K$ is $\mathbf{i}$. This cost is:
\begin{equation}
\label{eq:cost}
\psi_{\mathbf{i},l}^k=
\begin{cases}
1-\frac{1}{\varrho} &\mbox{if } \mathbf{i} \mbox{ is in the decision region of } H_l\\
1 & \mbox{otherwise.}
\end{cases} 
\end{equation}
for $k=K$, where $\varrho$ is the number of decision regions corresponding to a received codeword $\mathbf{i}$. In other words, it is the number of rows of code matrix $C^K$ which have the same minimum Hamming distance with the received codeword $\mathbf{i}$. Usually this value is 1, however $\varrho$ can be greater than one when there is a tie at the node at level $K-1$ and in those cases, the tie-breaking rule is to choose one of them randomly.

Employing a person-by-person optimization approach, we can find the local transmission mapping of the leaf nodes as follows \cite{Wang_jsac05}:
\begin{equation}
u_j^K=\tau_j^K(y_j^K)=\\
\begin{cases}
0,	&\text{if $\sum_{l}p(y_j^K|H_l)A_{jl}<0$}\\
1,	&\text{otherwise}
\end{cases},
\end{equation}
where $A=\{A_{jl}\}$ is the weight matrix whose values\footnote{We refer the reader to \cite{Wang_jsac05} for further details.} are given by,

\begin{multline}
A_{jl}=\sum_{i_1,\cdots,i_{j-1},i_{j+1},\cdots,i_N}P_lP(u_1^K=i_1|H_l) 
\times\cdots\times P(u^K_{j-1}=i_{j-1}|H_l)P(u^K_{j+1}=i_{j+1}|H_l)\\
\times\cdots\times P(u_{N}^K=i_{N}|H_l) 
\times [\psi^K_{i_1,\cdots,i_{j-1},0,i_{j+1},\cdots,i_N,l}-\psi^K_{i_1,\cdots,i_{j-1},1,i_{j+1},\cdots,i_N,l}].
\end{multline}

For $1\leq k\leq K-1$, the local classification decision $y_j^k \in \{1,\cdots,M\}$ made using the data from the successors $S^{k+1}(j)$ is discrete and, therefore, the transmission mapping $\tau^k(\cdot)$ is straight-forward and is given as follows:
\begin{equation}
u_j^k=\tau^k_j(y_j^k)=c^{k}_{y_j^kj},	\quad\text{if $1\leq k\leq K-1$.}
\end{equation}

In other words, the one-bit decision $u_j^k$ is the element of $C^{k}$ corresponding to $y_j^k$th row and $j$th column. For every intermediate node, the fusion rule $f_j^k(\cdot)$ is the minimum Hamming distance fusion rule as given in \eqref{eq:hamm}. Therefore, the performance of the scheme depends on the minimum Hamming distance of the code matrices. Let $d_{min}^k$ be the minimum Hamming distance of the code matrix $C^k$. In the remainder of this section, we derive the error expressions at intermediate nodes which will later be used for the design of code matrices at every level. 

\begin{proposition}
\label{prop1}
The probability of misclassification $P_e^{k-1}$ at level $k-1$ due to the data received from level $k$ and using code matrix $C^k=\{c_{mj}^k\}$ ($1\leq k\leq K-1$, $1\leq m\leq M$, $1\leq j\leq N$) is:

\begin{equation}
\label{eq:pe_inter}
P_e^{k-1}=\sum_{\mathbf{i},l}P_l\prod_{j=1}^N \left[(2i_j-1)\sum_{m=1}^Mc^k_{mj}P^k_{ml} +(1-i_j)\right]\psi^k_{\mathbf{i},l} \mbox{,}
\end{equation}

where $\mathbf{i}=[i_1, \cdots, i_N] \in \{0, 1\}^N$ is the realization of the received codeword $\mathbf{u^k}$, matrix $P^k=\{P_{ml}^k\}$ is the confusion matrix of the local decisions at level $k$, and $\psi^k_{\mathbf{i},l}$ is the cost associated with a global decision $H_l$ at level $k-1$ when the received vector from level $k$ is $\mathbf{i}$. This cost is given by \eqref{eq:cost}.
\end{proposition}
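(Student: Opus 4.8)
The plan is to mirror the derivation of the leaf-node error \eqref{eq:leaf}, replacing the continuous observation model by the discrete decision model appropriate to intermediate nodes. First I would write the misclassification probability at level $k-1$ in the same averaged-cost form as \eqref{eq:leaf}: a sum over the received codeword realization $\mathbf{i}$, the true hypothesis $H_l$, and the successor decisions $\mathbf{y^k}=[y_1^k,\ldots,y_N^k]$, with each term weighted by the cost $\psi^k_{\mathbf{i},l}$. The essential structural difference from the leaf case is that at an intermediate node the ``observation'' $y_j^k$ is itself a discrete classification decision taking values in $\{1,\ldots,M\}$, so the integral over $\mathbf{y^K}$ in \eqref{eq:leaf} becomes a finite sum over $\mathbf{y^k}$.

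Next I would exploit the two modeling assumptions to factor this sum. Conditional independence of the successor decisions given $H_l$ gives $P(\mathbf{y^k}\mid H_l)=\prod_{j=1}^N P(y_j^k\mid H_l)$, and the symmetry assumption that all level-$k$ nodes share the code matrix $C^k$ lets me identify $P(y_j^k=m\mid H_l)$ with the confusion-matrix entry $P^k_{ml}$, independent of $j$. Crucially, the transmission mapping at an intermediate node is deterministic, $u_j^k=c^k_{y_j^k j}$, so the per-node channel factor collapses to an indicator $\mathbf{1}[c^k_{mj}=i_j]$. Substituting these and interchanging the product over $j$ with the sums over the individual $y_j^k$ (legitimate since the summand factorizes across $j$) yields
\[
P_e^{k-1}=\sum_{\mathbf{i},l}P_l\prod_{j=1}^N\left[\sum_{m=1}^M \mathbf{1}[c^k_{mj}=i_j]\,P^k_{ml}\right]\psi^k_{\mathbf{i},l}.
\]

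It then remains to rewrite the bracketed inner sum into the stated bilinear form. The key algebraic identity, valid because $c^k_{mj},i_j\in\{0,1\}$, is $\mathbf{1}[c^k_{mj}=i_j]=(2i_j-1)c^k_{mj}+(1-i_j)$, which I would verify by checking the two cases $i_j=0$ and $i_j=1$. Inserting this and splitting the sum over $m$ gives $(2i_j-1)\sum_{m}c^k_{mj}P^k_{ml}+(1-i_j)\sum_m P^k_{ml}$, and the normalization $\sum_{m=1}^M P^k_{ml}=1$ (each successor must decide exactly one hypothesis, so the confusion matrix is column-stochastic) reduces the second term to $(1-i_j)$, producing exactly \eqref{eq:pe_inter}. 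I do not anticipate a serious obstacle, since the result is essentially the discrete-observation specialization of \eqref{eq:leaf}; the only points needing care are justifying the interchange of the product with the per-coordinate sums and confirming the column-stochasticity of $P^k$ so that the normalization step applies.
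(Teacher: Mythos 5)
Your proof is correct and takes essentially the same approach as the paper: both arguments factor $P(\mathbf{u^k}=\mathbf{i}|H_l)$ via conditional independence of the successor decisions and then marginalize over the discrete local classification $y_j^k$ through the deterministic map $u_j^k=c^k_{y_j^k j}$, with your indicator identity $\mathbf{1}[c^k_{mj}=i_j]=(2i_j-1)c^k_{mj}+(1-i_j)$ plus column-stochasticity of $P^k$ being algebraically identical to the paper's rewriting of $P(u_j^k=i_j|H_l)$ as $(1-i_j)+(2i_j-1)P(u_j^k=1|H_l)$ followed by $P(u_j^k=1|H_l)=\sum_{m}c^k_{mj}P^k_{ml}$. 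The only difference is the order of the two steps (you marginalize over $y_j^k$ first and simplify afterward; the paper splits on the binary value of $u_j^k$ first), which is purely cosmetic.
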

\begin{IEEEproof}

If $u_j^k$ denotes the bit sent by the node $j$ at level $k$ and the global decision is made using the Hamming distance criterion:
\begin{equation}
P_{e}^{k-1}=\sum_{\mathbf{i},l}P_lP(\mathbf{u^k}=\mathbf{i}|H_l)\psi_{\mathbf{i},l}^k\mbox{.}
\end{equation}
Since local decisions are conditionally independent, $P(\mathbf{u^k}=\mathbf{i}|H_l)=\prod_{j=1}^N P(u_j^k=i_j|H_l)$.
Further,

\begin{align*}
P(u_j^k=i_j|H_l) &= i_jP(u_j^k=1|H_l)+(1-i_j)P(u_j^k=0|H_l)\\
&=(1-i_j)+(2i_j-1)P(u_j^k=1|H_l)\\
&=(1-i_j)+(2i_j-1)\sum_{m=1}^Mc^k_{mj}P(y_j^k=m|H_l)\\
&=(1-i_j)+(2i_j-1)\sum_{m=1}^Mc^k_{mj}P^k_{ml}
\end{align*}

where $y_j^k$ is the local classification decision made by node $j$ after collecting data from its successors $S^{k+1}(j)$ at level $k+1$. The desired result follows.
\end{IEEEproof}

Note that this suggests that the probability of misclassification at level $k-1$ is dependent on the confusion matrix at level $k$. These can be derived easily as follows:
\begin{equation}
P^k_{ml}\defeq P(\text{decide $H_m$ at level $k$}|\text{$H_l$ is true})= 
1-\sum_{\mathbf{i}}p(\mathbf{u^k}=\mathbf{i}|H_l)\psi_{\mathbf{i},m}^{k+1}\label{eq:conf}
\end{equation}

From these expressions, we can observe that there is a recursive structure, where the probability of misclassification at level $k$ is dependent on the confusion matrix of level $k+1$. Therefore, the performance at the FC depends on all the code matrices in a recursive manner. As mentioned before, we propose a simpler approach by assuming that each node of the same level uses the same code matrix which is designed by optimizing on a person-by-person sequential basis. We start with the code design at level $K-1$ to fuse data from level $K$. This is designed by optimizing the expression in \eqref{eq:leaf}. Once we have designed the optimal code matrix at this level, we derive the corresponding confusion matrix from \eqref{eq:conf}, which is used to design the code matrix at the next level by optimizing the expression in \eqref{eq:pe_inter}. Following this method, we can design all the code matrices. Note that each of these optimizations can be performed offline using approaches such as simulated annealing or cyclic-column replacement \cite{Wang_jsac05}.

In the following subsection, we analyze our scheme in the asymptotic regime and show that the scheme is asymptotically optimal. 

\subsection{Asymptotic Optimality}
\label{sec:a-optimalclass}
We study the asymptotic classification performance of our scheme for two different classes
of tree networks. The first one is the class of \textit{fixed height trees} in which the
height of the tree, $K$, is assumed to be fixed while the second is the class of \textit{fixed degree trees} in which the degree of the tree, $N$, is assumed to be fixed.
More specifically, we study the classification performance of minimum Hamming distance fusion in fixed height tree networks when the number of nodes tends to infinity and in fixed degree tree networks when the height of the tree tends to infinity.
We first provide the following bound on the misclassification probability at the FC which will be used to prove the asymptotic optimality. Let $Q^k_{m}$ be the probability of misclassifying hypothesis $H_m$ at level $k$ and define $q_{max}^k \defeq \max_{1\leq m\leq M} Q_m^k$. Note that for levels $0\leq k \leq K-1$, we have $Q^k_{m}=1-P_{mm}^k$ where $P_{ml}^k$ are the elements of the confusion matrix. For $k=K$, $Q_m^K=1-Pr(\text{decide $H_m$ at level $K | H_m$ is true})$.

\begin{proposition}
\label{asy}
In a perfect tree structure $T(K,N)$ employing the proposed scheme, the misclassification probability at the FC, $P_{e}^{0}$, is bounded as follows
\begin{equation}
P_{e}^{0}\leq \left[q_{max}^K\right]^{\displaystyle\prod_{k=1}^{K}\frac{d_{min}^k}{a_k}},\label{eq:bound}
\end{equation}
if $q_{max}^K< \frac{1}{2}$ and 
\begin{equation}
\label{dcon}
d_{min}^k\geq \dfrac{2(M-2)}{[1-4q_{max}^k(1-q_{max}^k)]-(1/a_k)[(2/q_{max}^k)-2]},\;\forall k,
\end{equation}
where $a_k$ is a parameter which satisfies the following condition
\begin{equation}
\label{acon}
a_k > \dfrac{2(1-q_{max}^k)}{q_{max}^k-4(q_{max}^k)^2(1-q_{max}^k)},\forall k.
\end{equation}
\end{proposition}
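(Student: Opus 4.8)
The plan is to reduce the $K$-level bound \eqref{eq:bound} to a single-level recursion that propagates the worst-case misclassification probability up the tree, and then to compose this recursion $K$ times. First I would establish the one-step inequality
\[
q_{max}^{k-1}\le\left[q_{max}^k\right]^{d_{min}^k/a_k},\qquad 1\le k\le K,
\]
valid whenever $q_{max}^k<\frac12$, and then chain these from $k=K$ down to $k=1$. Writing $e_k\defeq d_{min}^k/a_k>0$ and using that $x\mapsto x^{e}$ is increasing on $(0,\infty)$, each substitution $\left[q_{max}^k\right]^{e_k}\le\left[q_{max}^{k+1}\right]^{e_{k+1}e_k}$ is legitimate, so telescoping yields $q_{max}^0\le\left[q_{max}^K\right]^{\prod_{k=1}^K d_{min}^k/a_k}$. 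Since $P_e^0=\sum_l P_l Q_l^0\le q_{max}^0$, this is exactly the claim \eqref{eq:bound}.

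For the one-step inequality I would argue from the minimum Hamming distance fusion rule \eqref{eq:hamm}. Conditioned on $H_m$, the probability that node $j$ at level $k$ transmits a bit differing from its intended codeword entry $c_{mj}^k$ is $\sum_{m'\,:\,c_{m'j}^k\ne c_{mj}^k}P_{m'm}^k\le\sum_{m'\ne m}P_{m'm}^k=Q_m^k\le q_{max}^k$, where the confusion-matrix relation \eqref{eq:conf} is used for $1\le k\le K-1$ and the analogous leaf-level definition for $k=K$. A decoding error toward a competing hypothesis $H_l$ requires the received word to be at least as close to $\mathbf{r}_l$ as to $\mathbf{r}_m$, and since $d_H(\mathbf{r}_l,\mathbf{r}_m)\ge d_{min}^k$ this forces disagreements on at least half of the (at least $d_{min}^k$) positions in which the two rows differ. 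Bounding this pairwise error by the Bhattacharyya coefficient, each differing position contributes a factor $2\sqrt{p_j(1-p_j)}\le 2\sqrt{q_{max}^k(1-q_{max}^k)}$ (monotone for $p_j<\frac12$), so a union bound over the $M-1$ competing hypotheses gives
\[
Q_m^{k-1}\le (M-1)\left[4q_{max}^k(1-q_{max}^k)\right]^{d_{min}^k/2}.
\]

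The crux, and the step I expect to be the main obstacle, is converting this Chernoff/Bhattacharyya-type estimate into the clean power $\left[q_{max}^k\right]^{d_{min}^k/a_k}$; this is precisely what the technical hypotheses \eqref{dcon} and \eqref{acon} are engineered to enable. Writing $q\defeq q_{max}^k$, $d\defeq d_{min}^k$, $a\defeq a_k$, taking logarithms turns the target $(M-1)[4q(1-q)]^{d/2}\le q^{d/a}$ into
\[
\ln(M-1)\le \frac{d}{2}\bigl(-\ln[4q(1-q)]\bigr)-\frac{d}{a}\bigl(-\ln q\bigr).
\]
I would then replace each transcendental term by an algebraic surrogate through the elementary inequality $\ln x\le x-1$, in the forms $\ln(M-1)\le M-2$, $-\ln[4q(1-q)]\ge 1-4q(1-q)$, and $-\ln q\le (1-q)/q=\frac12[(2/q)-2]$. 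Substituting these reduces the display to the sufficient condition $d_{min}^k\bigl\{[1-4q(1-q)]-(1/a_k)[(2/q)-2]\bigr\}\ge 2(M-2)$, which is exactly \eqref{dcon}, while positivity of the brace, needed for the bound to be non-vacuous, is exactly \eqref{acon}. The remaining care is to check that $q_{max}^k<\frac12$ persists at every level so that the one-step bound may be reapplied; I would establish this by the same downward induction, taking the hypothesis $q_{max}^K<\frac12$ as the base case. By contrast, the coding-theoretic estimate of the second paragraph and the telescoping of the first are comparatively routine.
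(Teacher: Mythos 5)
Your proposal follows essentially the same route as the paper's own proof: the same union bound over the $M-1$ competing hypotheses combined with the pairwise Bhattacharyya-type estimate $\left[\sqrt{4q_{max}^k(1-q_{max}^k)}\right]^{d_{min}^k}$ (which the paper imports from its cited reference \cite{ChenWHKY05}), the same elementary logarithm inequalities converting conditions \eqref{dcon} and \eqref{acon} into $(M-1)\left[4q_{max}^k(1-q_{max}^k)\right]^{d_{min}^k/2}\le \left[q_{max}^k\right]^{d_{min}^k/a_k}$, and the same level-by-level chaining of $q_{max}^{k-1}\le\left[q_{max}^k\right]^{d_{min}^k/a_k}$ down to the FC. If anything, you are slightly more explicit than the paper in flagging that $q_{max}^k<\tfrac{1}{2}$ must be shown to persist at every level for the one-step bound to be reapplied, a point the paper asserts without detailed justification.
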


\begin{proof}
To prove the proposition, we start with the inequality \eqref{dcon}

\begin{eqnarray}
d_{min}^k\geq \dfrac{2(M-2)}{[1-4q_{max}^k(1-q_{max}^k)]-(1/a_k)[(2/q_{max}^k)-2]}\\
\implies\frac{d_{min}^k}{2}\left(1-4q_{max}^k(1-q_{max}^k)\right)\geq (M-2)+\frac{d_{min}^k}{a_k}\left(\frac{1}{q_{max}^k}-1\right)\label{eq11}\\
\implies\frac{d_{min}^k}{2}\log\left(\frac{1}{4q_{max}^k(1-q_{max}^k}\right)\geq \log(M-1)+\frac{d_{min}^k}{a_k}\log\frac{1}{q_{max}^k}\label{thirdineq}\\
\implies\left[q_{max}^k\right]^{\frac{d_{min}^k}{a_k}}\geq (M-1) \left[\sqrt{4q_{max}^{k}(1-q_{max}^{k})}\right]^{d_{min}^k}
\end{eqnarray}

where \eqref{eq11} is true because $a_k$ satisfies \eqref{acon}, and \eqref{thirdineq} can be proved by applying the logarithm inequality: $(x-1)\geq\log x \geq \dfrac{x-1}{x}$, for $x>0$. Now, for $k=1,\cdots,K$

\begin{eqnarray}
Q_{m}^{k-1}= Pr(\text{decision at level $k-1$} \neq H_m\;|\; H_m)&\leq&  Pr(d^k(\mathbf{u^k},\mathbf{c_m^k})\geq \underset{1 \leq l \leq M,\;l\neq m}{\min}d^k(\mathbf{u^k},\mathbf{c_l^k})\;|\;H_m)\nn\\
&\leq& \sum\limits_{\underset{l\neq m}{l=1}}^{M} Pr(d^k(\mathbf{u^k},\mathbf{c_m^k})\geq d^k(\mathbf{u^k},\mathbf{c_l^k})\;|\;H_m)\nn\\
&\leq& \sum\limits_{\underset{l\neq m}{l=1}}^{M} \left[\sqrt{4Q_{mm}^{k}(1-Q_{mm}^{k})}\right]^{d^k(\mathbf{c_l^k},\mathbf{c_m^k})}\label{17}\\
&\leq& (M-1) \left[\sqrt{4q_{max}^{k}(1-q_{max}^{k})}\right]^{d_{min}^k}\\
&\leq& \left[q_{max}^k\right]^{\frac{d_{min}^k}{a_k}}\label{19}.
\end{eqnarray}

Note that \eqref{17} is true when $Q_{m}^{k}<\frac{1}{2}$ as shown in \cite{ChenWHKY05}, which holds when \eqref{dcon} and \eqref{acon} are true. Using the above results, the average probability of error can be bounded as follows.
\begin{eqnarray*}
P_e^0&=& \sum\limits_{m=1}^{M}P_m Pr(\text{decision at the FC}\neq H_m\;|\; H_m)\\
&\leq&\sum\limits_{m=1}^{M}P_m \left[q_{max}^1\right]^{\frac{d_{min}^1}{a_1}}\\
&=& \left[q_{max}^1\right]^{\frac{d_{min}^1}{a_1}}
\end{eqnarray*}
Now, since $Q_{m}^{1}\leq\left[q_{max}^2\right]^{\frac{d_{min}^2}{a_2}}$ $\forall m$, we have $q_{max}^{1}\leq\left[q_{max}^2\right]^{\frac{d_{min}^2}{a_2}}$. Continuing in this manner, we get
\begin{eqnarray*}
P_e^0&\leq& \left[q_{max}^K\right]^{\prod_{k=1}^{K}\frac{d_{min}^k}{a_k}}.
\end{eqnarray*}
\end{proof}

The results obtained in Proposition~\ref{asy} show that the misclassification probability for minimum Hamming distance fusion can be upper-bounded by a quantity determined by the minimum Hamming distance of the code matrices ($d_{min}^k,\;\forall k$), and the largest local classification error among all hypotheses ($q_{max}^k,\;\forall k$). Also, note that the parameter $a_k$ in \eqref{eq:bound} can be chosen appropriately to make the bound tighter. For example, if $a_k$ is chosen such that

\begin{equation*}
\dfrac{q_{max}^k(2M-2)+2(1-q_{max}^k)}{q_{max}^k-4(q_{max}^k)^2(1-q_{max}^k)}>a_k > \dfrac{2(1-q_{max}^k)}{q_{max}^k-4(q_{max}^k)^2(1-q_{max}^k)},\forall k,
\end{equation*}

then, $(d_k/a_k)>1,\forall k$, and we have $$P_{e}^{0}\leq \left[q_{max}^K\right]^{\prod_{k=1}^{K}(d_{min}^k/a_k)}\leq \left[q_{max}^K\right]^{(d_{min}^K/a_K)}.$$ 
As a consequence, for fixed height trees, the decoding error of the proposed scheme vanishes as $d_{min}^K$ approaches infinity which happens when $N\to\infty$. 
Also, for fixed degree trees, 
the decoding error of the proposed scheme vanishes as $K$ approaches infinity.
These results can be summarized in the following theorem:

\begin{theorem}
Under conditions \eqref{dcon} and \eqref{acon}, the proposed coding theory based distributed classification scheme is asymptotically optimal, for both classes of tree networks: fixed
height tree networks and fixed degree tree networks, as long as the probabilities of correct local
classification for all hypotheses of the leaf nodes are greater than one half. 
\end{theorem}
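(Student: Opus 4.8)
The plan is to obtain the theorem as a consequence of Proposition~\ref{asy} together with the tightening argument given immediately before the statement, so that the only genuinely new work is interpreting the hypotheses and evaluating two limits. First I would translate the hypothesis that every leaf hypothesis is correctly classified with probability exceeding one half into the paper's notation: since $Q_m^K = 1 - \Pr(\text{decide } H_m \text{ at level } K \mid H_m)$, the hypothesis reads $Q_m^K < 1/2$ for all $m$, i.e.\ $q_{max}^K < 1/2$. This is exactly the standing assumption of Proposition~\ref{asy}, so under \eqref{dcon} and \eqref{acon} I may invoke \eqref{eq:bound},
\begin{equation*}
P_e^0 \le \left[q_{max}^K\right]^{\prod_{k=1}^K d_{min}^k/a_k},
\end{equation*}
and I would take ``asymptotically optimal'' to mean $P_e^0 \to 0$ in the relevant limit (the best attainable error given unboundedly many observations).

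Next I would arrange that every per-level gain exceeds one. Choosing each $a_k$ in the interval displayed just before the theorem is admissible, since that interval is nonempty whenever $M \ge 2$ and $0 < q_{max}^k < 1/2$: its width equals $(2M-2)/(1-2q_{max}^k)^2 > 0$ and its common denominator $q_{max}^k-4(q_{max}^k)^2(1-q_{max}^k)=q_{max}^k(1-2q_{max}^k)^2$ is positive on this range. Together with \eqref{dcon}, this upper restriction on $a_k$ forces $d_{min}^k/a_k > 1$ for every $k$. Because every factor then exceeds one, the product is at least its final factor $d_{min}^K/a_K$, so with $q_{max}^K \in (0,1)$,
\begin{equation*}
P_e^0 \le \left[q_{max}^K\right]^{\prod_{k=1}^K d_{min}^k/a_k} \le \left[q_{max}^K\right]^{d_{min}^K/a_K}.
\end{equation*}

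For the fixed-height class ($K$ fixed, $N \to \infty$) this single-factor bound already suffices: the leaf-level fusion uses an $M \times N$ code whose minimum distance $d_{min}^K$ can be driven to infinity as $N \to \infty$, while $a_K$ depends only on the fixed quantities $q_{max}^K$ and $M$ and is therefore constant; hence the exponent diverges and the right-hand side tends to $0$. For the fixed-degree class ($N$ fixed, $K \to \infty$) I would instead retain the full product. Now every $d_{min}^k$ is bounded, because only finitely many $M \times N$ codes exist, so no single factor can blow up; the growth must come from the number of factors increasing with $K$.

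The fixed-degree limit is the crux and the main obstacle. That each factor exceeds one does not by itself make $\prod_{k=1}^K d_{min}^k/a_k$ diverge, so I must show the gains stay uniformly bounded away from one, or at least that $\sum_k \log(d_{min}^k/a_k) = \infty$. This hinges on how $q_{max}^k$ evolves down the tree: the recursion $q_{max}^{k-1} \le [q_{max}^k]^{d_{min}^k/a_k}$ established inside the proof of Proposition~\ref{asy} makes $q_{max}^k$ decrease toward the FC, and with $d_{min}^k$ capped by the fixed degree one must verify that \eqref{dcon} and \eqref{acon} remain jointly satisfiable at every level with a gain bounded below by a constant strictly above one. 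Equivalently, I must check that the admissible window for $a_k$ does not collapse the gain toward one as $q_{max}^k$ shrinks. Settling this uniformity is where the real care lies; by contrast, once the single-factor bound is in hand the fixed-height case is essentially immediate.
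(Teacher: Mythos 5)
Your proposal is the paper's own argument, reproduced faithfully up to exactly the point where the paper itself stops being rigorous. The translation of the hypothesis into $q_{max}^K<\tfrac12$, the invocation of Proposition~\ref{asy}, the choice of each $a_k$ inside the displayed window so that every per-level gain $d_{min}^k/a_k$ exceeds one, the resulting single-factor bound $P_e^0\le\left[q_{max}^K\right]^{d_{min}^K/a_K}$, and the observation that $d_{min}^K\to\infty$ as $N\to\infty$ with $a_K$ fixed --- this is precisely how the paper handles the fixed-height class, and your account of it is correct. (One slip you inherited from the paper: the printed upper end of the window, with numerator $q_{max}^k(2M-2)+2(1-q_{max}^k)$, is slightly too large for \eqref{dcon} to force $d_{min}^k/a_k>1$; the implication holds if $2M-2$ is replaced by $2M-4$. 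For $M=2$ the printed window is nonempty while \eqref{dcon} is vacuous, so nothing there forces $d_{min}^k>a_k$.)

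The ``crux'' you decline to wave through is exactly what the paper waves through: for fixed-degree trees the paper offers no uniformity argument at all, only the sentence asserting that the error ``vanishes as $K$ approaches infinity.'' Your skepticism is warranted, and one can sharpen it into an impossibility statement: the product-of-gains route cannot be completed. Indeed, \eqref{acon} gives $a_k>2(1-q_{max}^k)/\left[q_{max}^k(1-2q_{max}^k)^2\right]\ge 2/q_{max}^k-2$, so a gain $d_{min}^k/a_k>1$ with $d_{min}^k\le N$ forces $q_{max}^k>2/(N+2)$ at \emph{every} level. Combining this lower bound at level $1$ with the recursion $q_{max}^{k-1}\le\left[q_{max}^k\right]^{d_{min}^k/a_k}$ shows that the total exponent $\prod_{k}d_{min}^k/a_k$ must stay below the constant $\log\!\bigl((N+2)/2\bigr)/\log\!\bigl(1/q_{max}^K\bigr)$ (up to the bounded first factor), uniformly in $K$; hence the right-hand side of \eqref{eq:bound} is bounded away from zero as $K\to\infty$, and no choice of the $a_k$'s can make it vanish. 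The fixed-degree case instead needs the raw per-level inequality --- the penultimate line in the chain \eqref{17}--\eqref{19}, $q_{max}^{k-1}\le(M-1)\left[4q_{max}^k(1-q_{max}^k)\right]^{d_{min}^k/2}$ --- iterated directly: when every $d_{min}^k\ge3$ this gives $q_{max}^{k-1}\le 8(M-1)\left[q_{max}^k\right]^{3/2}$, which is a superlinear contraction once the leaf error is small enough (e.g.\ $q_{max}^K<1/\bigl(64(M-1)^2\bigr)$), so $q_{max}^k\to0$ doubly exponentially with depth. That elementary argument closes the fixed-degree case under a condition stronger than $q_{max}^K<\tfrac12$; neither your proposal nor the paper contains it, so on this point you have actually diagnosed the paper's proof rather than fallen short of it.
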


The conditions required for the above theorem depend on the minimum Hamming distance $d_{min}^k$ of the code matrices used at each level and can be interpreted as follows: the proposed scheme is optimal when the minimum Hamming distance of the code matrices is ``large enough" to ensure that perfect classification is made at every level of the tree. When the rows of the code matrices are well separated due to
large minimum Hamming distance, the proposed scheme can handle more errors and have good performance.

Also, observe that these results imply that when \eqref{dcon} and \eqref{acon} are satisfied, there is no loss in asymptotic performance when all the nodes at level $k$, for $k=1,\cdots,K$, use identical transmission mapping and identical fusion rules.

\subsection{Simulation Results}
\label{sec:simclass}
In this section, we evaluate the performance of the proposed scheme using simulations. Consider a tree network $T(3,7)$ consisting of a total $N_{total}=400$ nodes, including the FC. The leaf nodes sense the environment to identify among four ($M=4$) equally likely hypotheses. As discussed before, we assume that all the leaf node measurements are independent and identically distributed. Under each hypothesis, the probability density function is assumed to
be Gaussian distribution with the same variance ($\sigma^2=1$) but with different means $0,s,2s$, and $3s$ respectively. The signal-to-noise power ratio (SNR) of observations at each local node is given by $20\log_2{s}$. The code matrices are designed using the scheme described in Section~\ref{ap:class} and simulated annealing for optimization. The designed code matrices used at different levels of the tree are found to be

\begin{equation}
C^1=[11,8,9,9,3,9,12]
\end{equation}
\begin{equation}
C^2=[7,6,3,12,12,9,14]
\end{equation}
\begin{equation}
C^3=[3,8,14,12,9,12,9]
\end{equation}

where the code matrix is represented by a vector of $M$-bit integers. Each integer $m_j$ represents a
column of any arbitrary code matrix $C$ and can be expressed as $m_j=\sum_{l=1}^{M}c_{lj}$. For example, the integer $9$ in column 5 of $C^3$ represents $c^3_{15}=1,c^3_{25}=0,c^3_{35}=0$, and $c^3_{45}=1$.

\begin{figure}[t]
\centering
\includegraphics[width=3in,height=!]{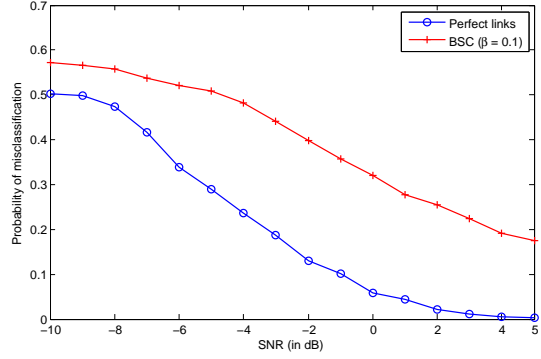}\vspace{-0.25in}
\caption{Probability of misclassification versus SNR}
\label{fig:sim}
\end{figure}

In Figure~\ref{fig:sim}, we plot the final probability of misclassification at the FC with varying SNR values. Note that this probability of misclassification is empirically found by performing $N_{mc}=5000$ Monte-Carlo runs. As we can observe, the performance of the scheme improves with increasing SNR and approaches 0 as early as 5dB. Since the proposed scheme is based on error-correcting codes, it can also tolerate some errors in data. These errors could be due to various reasons: presence of a faulty node \cite{Wang_jsac05}, presence of imperfect links between levels \cite{varshney_spmag06}, or presence of a malicious node sending falsified data \cite{vempaty_spm13}. In order to check the fault-tolerance capability of the scheme, we have simulated the case when the links between the levels are binary symmetric channels with crossover probabilities $\beta=0.05$ and $\beta=0.1$. As shown in Figure~\ref{fig:sim}, the proposed scheme still performs reasonably well even in the presence of imperfect data due to non-ideal channels modeled as binary symmetric channels.

Building on these results, in the following section, we address the parameter estimation problem in tree based networks. More specifically, we break the parameter estimation problem into a sequence of $M$-ary decision making problems, and each of these $M$-ary decision making problems is solved using a technique similar to the distributed classification scheme of the previous section. 

\section{Distributed Parameter Estimation in Tree Networks using Iterative Classification}
\label{sec:approachest}
Consider a distributed parameter estimation problem where the goal is to estimate a random scalar parameter $\theta$ at the FC. The parameter $\theta$ has a prior probability density function (pdf) $p_\theta(\theta)$ where $\theta\in \Theta$. We propose a scheme to estimate the parameter $\theta$ using iterative classification. By doing
so, we break the parameter estimation problem into a sequence of $M$-ary decision making problems. This is essentially a process of iterative rejection of unlikely objects where the most undesirable options are discarded and the scope of options is progressively narrowed down until exactly one option is left. 

\subsection{Proposed Scheme}
\label{sec:systestm}
\begin{figure}[t]
\centering
\includegraphics[width=3in,height=!]{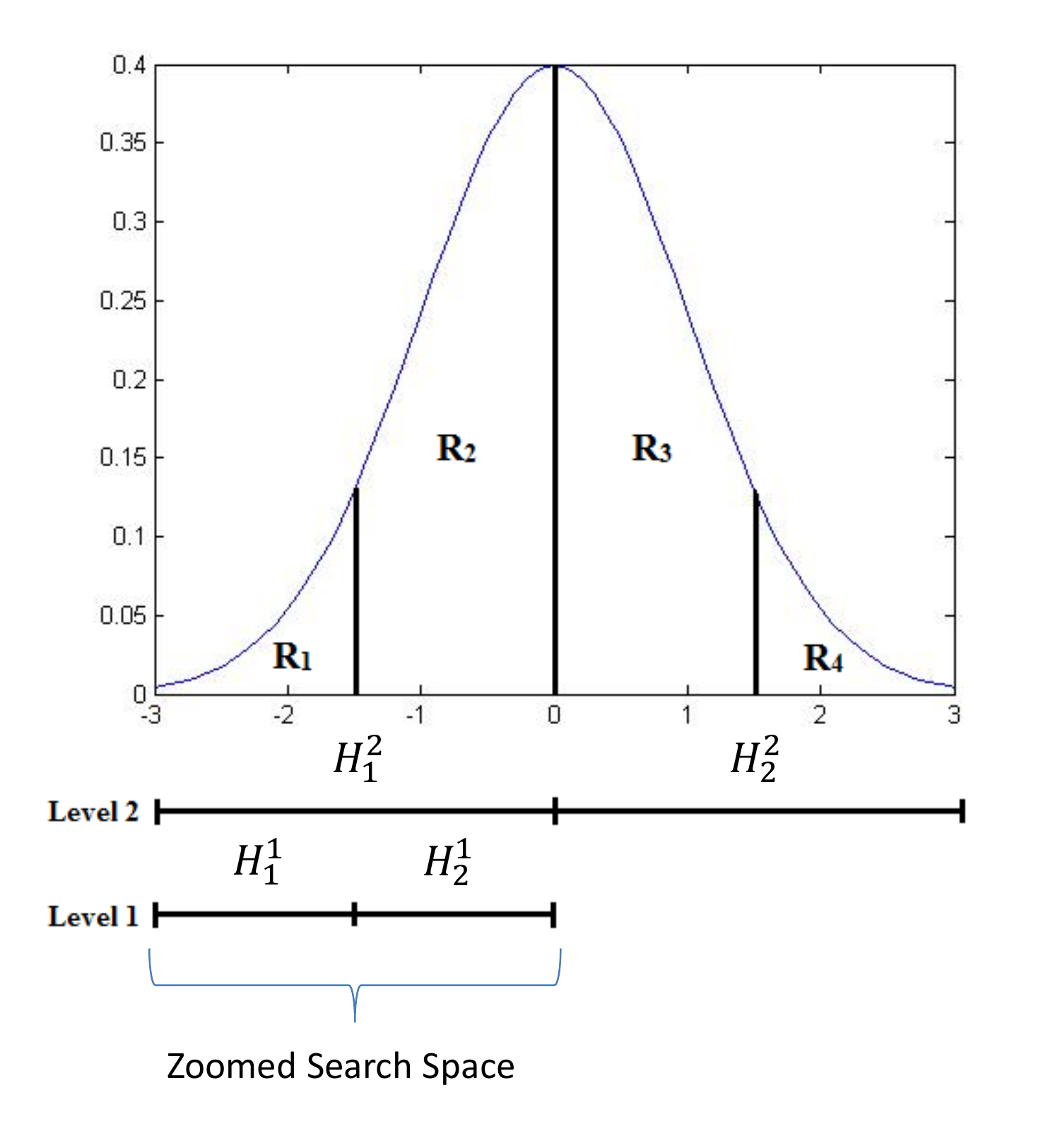}\vspace{-0.25in}
\caption{An example of splitting of parameter space.}
\label{fig:normal}
\end{figure}

We consider a distributed estimation system with the topology of a perfect tree, $T(K,\;N)$, rooted at the FC.
We model the parameter estimation problem as an $M$-ary hypotheses testing problem. Our scheme is iterative in which at every iteration $1\leq s \leq K$, the parameter space is split into $M$ regions and an $M$-ary hypothesis test is performed at the level $(K+1-s)$ of the tree to determine the parameter space for the next level in the tree. The optimal splitting of the parameter space at every iteration can be determined offline (which will be explained later in the paper in Section~\ref{sec:split}). For now, we assume that the $M^K$ final regions and their corresponding representation points are known. Let $H_l^k$, where $l=1,\cdots,M$ and $M\geq 2$, denote the $M$ hypotheses\footnote{As before, we assume that $N\geq{\log}_2 M$.} being tested at level $k$. Figure~\ref{fig:normal} shows an example of parameter space splitting when $p_\theta(\cdot)$ is standard Gaussian, and $M=K=2$. Every node at level $k=2$ first performs a classification task to determine if the parameter $\theta$ is positive or negative (differentiate between hypotheses $H_1^2$ and $H_2^2$). After a decision is made, the nodes at level $k=1$, `zoom' into the decided hypothesis, say $H_1^2$, and perform a classification task to determine if $\theta$ belongs to hypothesis $H_1^2$ or $H_2^1$. In this manner, the FC at level `0' eventually decides the true hypothesis, among the $M^K$ hypotheses, where $\theta$ belongs.

The \textit{a priori} probabilities of the $M^K$ hypotheses are denoted by $Pr(H_l^k)=P_l^k$, for $l=1,\cdots,M$ and $k=1,\cdots,K$. $P_l^k$ depends on $p_\theta(\cdot)$ and the region corresponding to $H_l^k$.
We assume that every node $j'$ at level $k+1$ acts as a source and makes a conditionally independent and identically distributed (i.i.d.) observation $y_{j'}^{k+1}$, conditioned under each hypothesis $H_l^k$. After processing the observations locally, every node sends its local decision $u_{j'}^{k+1}\in \{0,1\}$ according to a transmission mapping $\tau^{k+1}(\cdot)$ to its immediate predecessor $P^k(j')$. 
Each intermediate node $j$ at level $k$ receives the decision vector $\mathbf{v_{j}^k}$ consisting of local decisions made by its immediate successors $S^{k+1}(j)$ at level $k+1$. Intermediate nodes at level $k$, through collaboration\footnote{In collaboration phase, node $j$ at level $k$ shares $\mathbf{v_j^k}$ (the data collected from its successors $S^{k+1}(j)$) with other nodes at level $k$. In this paper, we assume that nodes do not compress $\mathbf{v_j^k}$ for collaboration and, therefore, after collaboration phase receive the data $\mathbf{v^k}=\left[\mathbf{v_1^k},\cdots,\mathbf{v_N^{k}}\right]\in\{0,1\}^{N^k}$.} and fusion, decide on the result of the $M$-ary hypotheses test as the new parameter space for them. 
\begin{figure}[t]
  \centering
    \includegraphics[height=.95in, width=!]{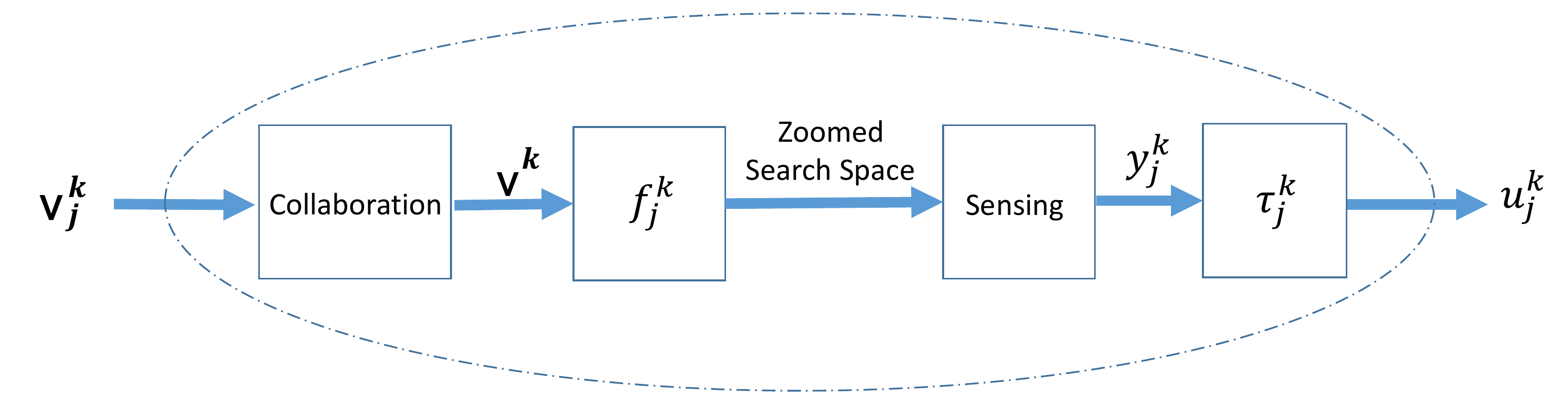}
    \vspace{-0.2in}
    \caption{Data processing for distributed estimation at node $j$ at level $1\leq k\leq K-1$. Here, $\mathbf{v_j^k}\in\{0,1\}^N$, $\mathbf{v^k}\in\{0,1\}^{N^k}$, $y_j^k\in\mathbb{R}$, and $u_j^k\in\{0,1\}$. Therefore, the mappings are $f_j^k:\{0,1\}^{N^k}\to\{1,\cdots,M\}$ and $\tau_j^k:\mathbb{R}\to\{0,1\}$.}\label{syst3}
\end{figure}

The scheme builds on the DCFECC scheme proposed for distributed classification. Each node $j$ at level $k$, for $1\leq k\leq K-1$ performs four basic operations (Please see Figure~\ref{syst3}): 
\begin{itemize}
\item Collect data from its successors $S^{k+1}(j)$ and collaborate with other nodes at level $k$.
\item Decide the new parameter space (hypotheses to test) by fusing data using fusion rule $f_j^k(\cdot)$. 
\item Acquire observation $y_j^k$ and perform hypothesis testing to determine the new parameter space.
\item Compress the observation at node $j$ about the hypothesis (new parameter space) and transmit a 1-bit version to the predecessor $P^{k-1}(j)$ using the transmission mapping $\tau_j^k(\cdot)$.
\end{itemize}

For the leaf nodes (level $K$), there are no successors and, therefore, only the third and fourth operations need to be performed. The FC (level `0') collects data from its successors and makes the final decision regarding the region where $\theta$ belongs. Given a tree network, our objective is to find efficient transmission
mappings and fusion rules for nodes at all levels, to
maximize the estimation performance at the FC.

\textit{Remark:} There are three major differences between the scheme proposed here for distributed estimation in tree networks and the scheme proposed in Section~\ref{sec:approachclass} for distributed classification in tree networks:
\begin{itemize}
\item In the scheme proposed here, every node acts as a source node and senses the phenomenon while for the classification problem in Section~\ref{sec:approachclass}, only the leaf nodes act as source nodes and intermediate nodes act only as relay nodes.
\item In Section~\ref{sec:approachclass}, each node performs the same classification task, or in other words, the set of classes are the same. On the other hand, in the scheme proposed here, the parameter space is `zoomed' at every level, which changes the corresponding classes to be tested.
\item An important step in the scheme proposed in this section is the collaboration step which is not required for the classification problem of Section~\ref{sec:approachclass}.
\end{itemize}

By appealing to symmetry, we assume that each node at the same level $k$, uses an identical code matrix $C^k$ for transmission to its predecessor and $C^{k+1}$ for fusion of data from its successors. Each of the functions $f_j^k(\cdot)$ and $\tau_j^k(\cdot)$ depend on the code matrix used at level $k$. Although the performance metric in this
framework is the Mean Square Error (MSE), it is difficult
to obtain a closed form representation for MSE. Therefore,
typically, one uses the bounds on MSE to characterize the
performance of the estimator. Here, we use an analytically tractable metric to analyze the performance of the proposed scheme which is the probability of misclassification of the parameter region. It is an important metric when the final goal of the parameter estimation task is to find the approximate region or neighborhood where the parameter lies rather than the true value of the parameter itself. Since the final region could be one of the $M^K$ regions, a metric of interest is the probability of `zooming' into the correct region. In other words, it is the probability that the true value of the parameter and the estimated value of the parameter lie in the same region.

Now, we design the transmission mapping $\tau_j^k(\cdot)$ of nodes at level $k$. Notice that the final region of the estimated value of the parameter is the same as the true value of the parameter, if and only if we `zoom' into the
correct region at every iteration of the proposed scheme. Thus, when the nodes at level $k$ use code matrix $C^k$ for transmission, the probability of misclassification at level $k-1$ is given by the following proposition.

\begin{proposition}
\label{prop2}
The probability of misclassification $P_e^{k-1}$ at level $k-1$ due to the data received from level $k$ and using code matrix $C^k=\{c_{mj}^k\}$ ($1\leq k\leq K$, $1\leq m\leq M$, $1\leq j\leq N^k$) is

\begin{equation}
P_e^{k-1}= 1-\prod\limits_{t=k}^{K}
\Bigg[1-\sum_{\mathbf{i},l}\int_{\mathbf{y^t}}P_l^t P(u_1^t=i_1|y_1^t)\times\cdots 
\times P(u_{N^t}^t=i_{N^t}|y_{N^t}^t)p(\mathbf{y^t}|H_l^t) \psi^{t}_{\mathbf{i},l}\Bigg],\label{eq:nk}
\end{equation}

where $\mathbf{i}=[i_1, \cdots, i_{N^t}] \in \{0, 1\}^{N^t}$ is the realization of the received codeword $\mathbf{u^t}$, $\mathbf{y^t}=[y^t_1,\cdots,y^t_{N^t}]$ are the local observations of nodes at level $t$, matrix $P^t=\{P_{ml}^t\}$ is the confusion matrix of the local decisions at level $t$, and $\psi^{t-1}_{\mathbf{i},l}$ is the cost associated with a global decision $H_l^{t-1}$ at level $t-1$ when the received vector from level $t$ is $\mathbf{i}$. This cost is:
\begin{equation}
\label{eq:cost1}
\psi_{\mathbf{i},l}^t=
\begin{cases}
1-\frac{1}{\varrho} &\mbox{if } \mathbf{i} \mbox{ is in decision region of } H_l^t\\
1 & \mbox{otherwise.}
\end{cases} 
\end{equation}
where as before $\varrho$ is the number of decision regions corresponding to a received codeword $\mathbf{i}$. In other words, it is the number of rows of code matrix $C^t$ which have the same minimum Hamming distance with the received codeword $\mathbf{i}$. 
$P_l^t$ is the prior probability of hypothesis $H_l^t$ at level $t$. 

\end{proposition}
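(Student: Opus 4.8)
The plan is to reduce the final-region error at level $k-1$ to a chain of per-level ``zoom'' events and then exploit the conditional independence of the observations across levels to factor the joint success probability into a product.

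First I would isolate the single-level error. For each level $t$ define
\[
\epsilon_t \defeq \sum_{\mathbf{i},l}\int_{\mathbf{y^t}}P_l^t\, P(u_1^t=i_1|y_1^t)\times\cdots \times P(u_{N^t}^t=i_{N^t}|y_{N^t}^t)\,p(\mathbf{y^t}|H_l^t)\,\psi^{t}_{\mathbf{i},l},
\]
the probability that the $M$-ary test carried out at level $t$ selects the wrong region \emph{given that the correct set of hypotheses} $\{H_l^t\}$ \emph{is being tested there}. This is obtained by exactly the argument behind \eqref{eq:leaf} and Proposition~\ref{prop1}: condition on the true hypothesis $H_l^t$ with prior $P_l^t$, use the conditional independence of the local decisions across the $N^t$ nodes at level $t$ so that $P(\mathbf{u^t}=\mathbf{i}\mid H_l^t)=\prod_j P(u_j^t=i_j\mid H_l^t)=\prod_j \int P(u_j^t=i_j\mid y_j^t)\,p(y_j^t\mid H_l^t)\,dy_j^t$, and weight each received word $\mathbf{i}$ by the minimum-Hamming-distance cost $\psi^t_{\mathbf{i},l}$. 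Thus $1-\epsilon_t$ is the probability of a correct zoom at level $t$ when its test is set up correctly.

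Next I would set up the event decomposition. Let $E_t$ be the event that the zoom performed at level $t$ selects the region actually containing $\theta$. As stressed in the text, the final region decided at level $k-1$ equals the true region \emph{if and only if} every intermediate zoom is correct, so the success event is $\bigcap_{t=k}^{K} E_t$ and $1-P_e^{k-1}=\Pr\!\left(\bigcap_{t=k}^{K}E_t\right)$. The chain rule then gives
\[
\Pr\!\left(\bigcap_{t=k}^{K}E_t\right)=\prod_{t=k}^{K}\Pr\!\left(E_t \mid E_{t+1},\ldots,E_K\right),
\]
and I would argue each factor equals $1-\epsilon_t$: conditioned on all higher-level zooms being correct, the hypotheses tested at level $t$ are precisely $\{H_l^t\}$ with priors $P_l^t$, and since $\mathbf{y^t}$ is conditionally independent of the observations at other levels given the true hypothesis, the level-$t$ success probability is exactly $1-\epsilon_t$ irrespective of the conditioning. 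Substituting yields $1-P_e^{k-1}=\prod_{t=k}^{K}(1-\epsilon_t)$, which rearranges to \eqref{eq:nk}.

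The main obstacle is making the cross-level conditioning precise, because the $M$-ary test at level $t$ is not fixed in advance: the partition being tested there depends on which region was selected at level $t+1$. I would need to show that, restricted to the event that all higher-level decisions were correct, the level-$t$ problem collapses to the clean $M$-ary test whose error is $\epsilon_t$, and that the conditional independence of the measurements across levels removes any residual dependence so the per-level factors genuinely multiply. Once this telescoping-plus-independence step is secured, the remainder is the routine single-level DCFECC computation already carried out for \eqref{eq:leaf} and Proposition~\ref{prop1}.
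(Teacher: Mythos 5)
Your proposal is correct and follows essentially the same route as the paper: the paper's own proof simply observes that the level-$(k-1)$ decision is correct if and only if the zoom at every level $t=k,\ldots,K$ is correct, and then applies the single-level DCFECC error expression \eqref{eq:leaf} recursively to obtain the product form. Your chain-rule decomposition and the conditional-independence argument for why the per-level factors multiply are exactly the details the paper leaves implicit, so you have written a more careful version of the same proof rather than a different one.
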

\begin{proof}
Note that a correct decision is made at level $k-1$ if and only if the decision at all levels from $t=k$ to $t=K$ are correct. Therefore, using \eqref{eq:leaf} in a recursive manner at every level of the tree, we get the desired result.
\end{proof}
From \eqref{eq:nk}, we can observe that the performance at the FC depends on all the code matrices in a recursive manner. In this paper, we employ a simpler approach by assuming that code matrices are designed by optimizing on a person-by-person basis. The code matrix at each level is designed using an approach similar to that of a parallel topology. Note that each of these optimizations can be performed offline using approaches such as simulated annealing or cyclic-column replacement \cite{Wang_jsac05}.

Employing a person-by-person optimization approach, we can find the local transmission mapping of the nodes at level $k$ as follows:
\begin{equation}
u_j^k=\tau_j^k(y_j^k)=\\
\begin{cases}
0,	&\text{if $\sum_{l}p(y_j^k|H_l^k)A_{jl}^k<0$}\\
1,	&\text{otherwise}
\end{cases},
\end{equation}
where $A^k=\{A^k_{jl}\}$ is a weight matrix whose values are given by,

\begin{multline}
A_{jl}^k=\sum_{i_1,\cdots,i_{j-1},i_{j+1},\cdots,i_{N^k}}P_l^k P(u_1^k=i_1|H_l^k) 
\times\cdots\times P(u^k_{j-1}=i_{j-1}|H_l^k)P(u^k_{j+1}=i_{j+1}|H_l^k)\\
\times\cdots\times P(u_{N^k}^K=i_{N^k}|H_l^k) 
\times [\psi^k_{i_1,\cdots,i_{j-1},0,i_{j+1},\cdots,i_{N^k},l}-\psi^k_{i_1,\cdots,i_{j-1},1,i_{j+1},\cdots,i_{N^k},l}].
\end{multline}

For every intermediate node, the fusion rule $f_j^k(\cdot)$ is the minimum Hamming distance fusion rule as given in \eqref{eq:hamm}. Therefore, the performance of the scheme depends on the minimum Hamming distance of the code matrices. Let $d_{min}^k$ be the minimum Hamming distance of the code matrix $C^k$. In the remainder of this section, we analyze our scheme in the asymptotic regime and show that the scheme is asymptotically optimal.

\subsection{Asymptotic Optimality}
\label{sec:a-optimalest}
As before, we study the asymptotic performance of our scheme for two different classes
of tree networks, \textit{fixed height trees} and \textit{fixed degree trees}. We also analyze the scenarios where both the number of nodes and the height of the tree tend to infinity. We first provide the following bound on the misclassification probability at the FC which will be used to prove the asymptotic optimality. Let $Q^k_{m}$ be the probability of misclassifying hypothesis $H_m^k$ at level $k$ and define $q_{max}^k \defeq \max_{1\leq m\leq M} Q_m^k$. For $k=1,\cdots,K$, $Q_m^K=1-Pr(\text{decide $H_m^k$ at level $K | H_m^k$ is true})$.

\begin{proposition}
\label{asy1}
In a perfect tree structure $T(K,N)$ employing the proposed scheme, if $q_{max}^k<\frac{1}{2}$, the misclassification probability at the FC, $P_{e}^{0}$, is bounded as follows
\begin{equation}
\label{asym}
P_e^0\leq 1-\prod\limits_{k=1}^{K}\left[1-(M-1)(4q_{max}^k(1-q_{max}^k))^{\frac{d_{min}^k}{2}}\right].
\end{equation}
\end{proposition}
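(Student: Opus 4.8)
The plan is to exploit the multiplicative ``zoom-in'' structure already encoded in Proposition~\ref{prop2} and then to bound each per-level factor with the same minimum-Hamming-distance argument used in the proof of Proposition~\ref{asy}. Setting $k=1$ in \eqref{eq:nk}, the misclassification probability at the FC factorizes as $P_e^0 = 1 - \prod_{t=1}^K (1 - p_t)$, where $p_t$ denotes the bracketed per-level misclassification probability at level $t$. A short computation confirms that $p_t$ is exactly the Bayes misclassification probability $\sum_m P_m^t Q_m^t$ at level $t$ (the tie-correction $1/\varrho$ in \eqref{eq:cost1} only lowers it), so in particular $p_t \le \max_m Q_m^t$. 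This factorization is the crux of the estimation setting: because the scheme zooms into the decided region at each iteration, a correct final decision requires a correct $M$-ary decision at \emph{every} level, and the conditional independence of observations across levels makes the per-level correct-decision probabilities multiply. This is precisely the feature that distinguishes the estimation analysis from the classification analysis of Proposition~\ref{asy}, where an intermediate error may still be absorbed by the code at subsequent levels.

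Next I would bound each per-hypothesis misclassification probability $Q_m^t$ by reusing the union-bound/Bhattacharyya estimate in equations \eqref{17}--\eqref{19} of the proof of Proposition~\ref{asy}. Under minimum-Hamming-distance decoding at level $t$, each $Q_m^t$ is controlled by a union bound over the $M-1$ competing rows, and each pairwise term is bounded by the Bhattacharyya-type factor $\left[\sqrt{4 q_{max}^t(1-q_{max}^t)}\right]^{d^t(\mathbf{c_l^t},\mathbf{c_m^t})}$, with every exponent at least $d_{min}^t$. Combining with $p_t \le \max_m Q_m^t$ yields the per-level estimate
\begin{equation*}
p_t \le (M-1)\left(4 q_{max}^t(1-q_{max}^t)\right)^{d_{min}^t/2}.
\end{equation*}
The hypothesis $q_{max}^t < \tfrac{1}{2}$ is exactly what forces the Bhattacharyya coefficient $\sqrt{4 q_{max}^t(1-q_{max}^t)}$ to be strictly below one, making this bound meaningful.

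Finally I would assemble the levels. Since each factor satisfies $1 - p_t \ge 1 - (M-1)(4 q_{max}^t(1-q_{max}^t))^{d_{min}^t/2}$ and, in the regime of interest, all factors are nonnegative, replacing each factor by the smaller nonnegative bound can only decrease the product, so that
\begin{equation*}
\prod_{t=1}^K (1 - p_t) \ge \prod_{t=1}^K \left[1 - (M-1)\left(4 q_{max}^t(1-q_{max}^t)\right)^{d_{min}^t/2}\right].
\end{equation*}
Substituting this into $P_e^0 = 1 - \prod_{t=1}^K (1 - p_t)$ reverses the inequality and produces exactly \eqref{asym}.

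The step I expect to require the most care is this last product inequality: the monotonicity argument is valid only when every factor is nonnegative. I would therefore either restrict to the asymptotically relevant regime in which each $(M-1)(4 q_{max}^k(1-q_{max}^k))^{d_{min}^k/2} \le 1$ (the case consistent with $q_{max}^k < \tfrac{1}{2}$ and sufficiently large $d_{min}^k$), or simply observe that whenever some bound exceeds one the right-hand side of \eqref{asym} already exceeds one and the claim is vacuous. The only remaining bookkeeping is aligning the superscript indexing between the per-level bound borrowed from Proposition~\ref{asy} and the level-$t$ quantities appearing here.
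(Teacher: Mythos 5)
Your proposal is correct and follows essentially the same route as the paper: the paper likewise writes $P_e^0$ as one minus the product over levels of the prior-weighted probability of a correct decision at each level, bounds each level's error via the union bound and the Bhattacharyya-type factor $\bigl[\sqrt{4q_{max}^k(1-q_{max}^k)}\bigr]^{d^k(\mathbf{c_l^k},\mathbf{c_m^k})}$ borrowed from the proof of Proposition~\ref{asy} (equations \eqref{17}--\eqref{19}), and then multiplies the per-level bounds. Your closing caveat about needing each factor $1-(M-1)\bigl(4q_{max}^k(1-q_{max}^k)\bigr)^{d_{min}^k/2}$ to be nonnegative is a point the paper silently glosses over, so flagging it is a small but genuine improvement rather than a deviation in approach.
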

\begin{proof}
To prove the proposition, we first establish the following set of inequalities (Please see \eqref{17}-\eqref{19})
\begin{eqnarray*}
\sum\limits_{m=1}^{M} P_m^k Pr(\text{decision at level $k-1$}\neq H_m^k\;|\; H_m^k)&\leq&  \sum\limits_{m=1}^{M} P_m^kPr(d^k(\mathbf{u^k},\mathbf{c_m^k})\geq \underset{1 \leq l \leq M,\;l\neq m}{\min}d^k(\mathbf{u^k},\mathbf{c_l^k})\;|\;H_m^k)\nn\\
&\leq&\sum\limits_{m=1}^{M} P_m^k \sum\limits_{\underset{l\neq m}{l=1}}^{M} Pr(d^k(\mathbf{u^k},\mathbf{c_m^k})\geq d^k(\mathbf{u^k},\mathbf{c_l^k})\;|\;H_m^k)\nn\\
&\leq& \sum\limits_{m=1}^{M}P_m^k \sum\limits_{\underset{l\neq m}{l=1}}^{M} \left[\sqrt{4Q_{mm}^{k}(1-Q_{mm}^{k})}\right]^{d^k(\mathbf{c_l^k},\mathbf{c_m^k})}\\
&\leq& (M-1) \left[\sqrt{4q_{max}^{k}(1-q_{max}^{k})}\right]^{d_{min}^k}.
\end{eqnarray*}
Therefore,
\begin{eqnarray*}
\sum\limits_{l=1}^{M} P_m^k Pr(\text{decision at level $k-1$}\neq H_m^k\;|\; H_m^k)&\leq& (M-1) \left[\sqrt{4q_{max}^{k}(1-q_{max}^{k})}\right]^{d_{min}^k}\\
\Leftrightarrow
\sum\limits_{l=1}^{M} P_m^k Pr(\text{decision at level $k-1$}= H_m^k\;|\; H_m^k)&\geq &1-(M-1) \left[\sqrt{4q_{max}^{k}(1-q_{max}^{k})}\right]^{d_{min}^k}.
\end{eqnarray*}

Now, 
\begin{eqnarray*}
P_e^0&=&1-\prod\limits_{k=1}^{K}\sum\limits_{l=1}^{M} P_m^k Pr(\text{decision at level $k-1$}= H_m^k\;|\; H_m^k)\\
&\leq&1-\prod\limits_{k=1}^{K}\left[1-(M-1)(4q_{max}^k(1-q_{max}^k))^{\frac{d_{min}^k}{2}}\right].
\end{eqnarray*}
\end{proof}
As a consequence of Proposition~\eqref{asy1}, for fixed height trees, the probability of `zooming' into the incorrect region of the proposed scheme vanishes as $d_{min}^k$ approaches infinity which happens when $N\to\infty$. 

\begin{small}
\begin{eqnarray*}
\lim\limits_{N\rightarrow\infty} P_e^0&\leq& \lim\limits_{N\rightarrow\infty} \left[ 1-\prod\limits_{k=1}^{K}\left(1-(M-1)(4q_{max}^k(1-q_{max}^k))^{\frac{d_{min}^k}{2}}\right)\right]\\
&=&  \left[ 1-\prod\limits_{k=1}^{K}\lim\limits_{N\rightarrow\infty}\left(1-(M-1)(4q_{max}^k(1-q_{max}^k))^{\frac{d_{min}^k}{2}}\right)\right]\\
&=&   1-\prod\limits_{k=1}^{K}\left[1-(M-1)\lim\limits_{N\rightarrow\infty}\left((4q_{max}^k(1-q_{max}^k))^{\frac{d_{min}^k}{2}}\right)\right]\\
&=&   1-\prod\limits_{k=1}^{K}\left[1-(M-1)0\right]\\
&=& 0.
\end{eqnarray*}
\end{small}

Hence, the overall detection probability becomes `1' as the
degree of the tree $N$ goes to infinity. This shows that
the proposed scheme asymptotically attains perfect region
detection probability for bounded height tree networks if $q_{max}^k<1/2$ $\forall k=1,\cdots,K$. Notice that perfect region detection probability does not imply that the estimation error will vanish. It just provides a coarse estimate of the parameter. For estimation error to vanish, $M^K\rightarrow\infty$, which can be achieved by letting $K$ approach infinity.

However, for fixed degree trees, 
misclassification error of the proposed scheme need not vanish as $K$ approaches infinity.

\begin{eqnarray*}
\lim\limits_{K\rightarrow\infty} P_e^0&\leq& \lim\limits_{K\rightarrow\infty} \left[ 1-\prod\limits_{k=1}^{K}\left(1-(M-1)(4q_{max}^k(1-q_{max}^k))^{\frac{d_{min}^k}{2}}\right)\right]\\
&=&  \left[ 1-\lim\limits_{K\rightarrow\infty}\prod\limits_{k=1}^{K}\left(1-(M-1)(4q_{max}^k(1-q_{max}^k))^{\frac{d_{min}^k}{2}}\right)\right]
\end{eqnarray*}

For misclassification error to vanish, every term in the product should vanish, which obviously is not true for the above equation.

These results can be summarized as the following theorem:

\begin{theorem}
The proposed iterative classification scheme for distributed parameter estimation in tree based networks is asymptotically optimal (when both the degree $N$ and number of levels $K$ simultaneously approach infinity), as long as the probabilities of correct local classification for all hypotheses at each node is greater than one half. 
\end{theorem}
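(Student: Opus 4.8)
The plan is to read the theorem off the bound of Proposition~\ref{asy1} by reconciling the two asymptotic regimes discussed just above it. Write that bound as $P_e^0 \leq 1 - \prod_{k=1}^K (1-\epsilon_k)$ with $\epsilon_k \defeq (M-1)\bigl(4q_{max}^k(1-q_{max}^k)\bigr)^{d_{min}^k/2}$. The hypothesis that every node classifies correctly with probability exceeding one half means $q_{max}^k < 1/2$ for all $k$, hence $4q_{max}^k(1-q_{max}^k) < 1$ at every level.

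The first step is to collapse the product into a sum via the elementary Weierstrass inequality $1 - \prod_{k=1}^K(1-\epsilon_k) \leq \sum_{k=1}^K \epsilon_k$, valid since each $\epsilon_k \in [0,1]$. Putting $q^* \defeq \max_k q_{max}^k < 1/2$, $\rho \defeq 4q^*(1-q^*) < 1$, and $d^* \defeq \min_k d_{min}^k$, this gives the transparent estimate $P_e^0 \leq (M-1)\,K\,\rho^{d^*/2}$. The purpose of this reduction is to isolate the two competing effects: the factor $K$ records the number of levels, while $\rho^{d^*/2}$ decays exponentially in the minimum Hamming distance, which grows with the degree $N$.

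The heart of the argument, and the reason neither regime succeeds in isolation, is the joint limit. As $N\to\infty$ the code length at each level grows, so the matrices can be designed with $d_{min}^k\to\infty$ and hence $d^*\to\infty$; since $\rho<1$, $\rho^{d^*/2}$ then vanishes exponentially. The main obstacle is that letting $K\to\infty$ enlarges the product indefinitely, so I must guarantee that this exponential decay outruns the linear-in-$K$ count of terms. Taking logarithms, $P_e^0\to 0$ precisely when $d^*\ln(1/\rho) - 2\ln K \to +\infty$, i.e.\ whenever the degree grows fast enough that $d^*(N)/\ln K \to\infty$ along the joint limit; this is the mild coupling of rates that the simultaneous-infinity hypothesis supplies. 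Under it the region-misclassification probability at the FC vanishes.

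The final step converts correct region detection into optimal estimation. Vanishing misclassification only certifies that the scheme zooms into the correct one of the $M^K$ terminal regions; optimality of the estimate additionally requires these regions to shrink. Here I would invoke $K\to\infty$ a second time, so that $M^K\to\infty$ refines the partition of $\Theta$ without bound and, on a bounded or light-tailed domain, drives the diameter of every terminal region to zero. Correct detection into a vanishing region forces the estimation error to zero, so the simultaneous limits deliver both exact localization and arbitrarily fine resolution---exactly the asymptotic optimality asserted.
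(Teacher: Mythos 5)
Your proposal follows essentially the same route as the paper: both arguments rest entirely on the bound of Proposition~\ref{asy1}, and both split ``asymptotic optimality'' into the same two ingredients --- the probability of zooming into the wrong one of the $M^K$ terminal regions must vanish (driven by $N$ through $d_{min}^k\to\infty$), and the quantization error of the terminal partition must vanish (driven by $K$ through $M^K\to\infty$). The difference is in how the limits are organized. The paper never analyzes a genuine simultaneous limit: it proves $\lim_{N\to\infty}P_e^0=0$ with $K$ \emph{held fixed} by pushing the limit inside the finite product in \eqref{asym}, shows separately that fixed $N$ with $K\to\infty$ fails, and then states the theorem as a summary. Your Weierstrass collapse $1-\prod_k(1-\epsilon_k)\le\sum_k\epsilon_k$, giving $P_e^0\le (M-1)K\rho^{d^*/2}$, is sharper bookkeeping: it makes the competition between the two parameters explicit and yields the clean sufficient condition $d^*(N)\ln(1/\rho)-2\ln K\to+\infty$, which the paper's interchange-of-limits argument cannot produce since it is only valid for finite $K$.

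The one overreach is your claim that this coupling ``is the mild coupling of rates that the simultaneous-infinity hypothesis supplies.'' It is not: the hypothesis that $N\to\infty$ and $K\to\infty$ jointly imposes no rate relation at all, so along a path where $K$ grows super-exponentially in $d^*(N)$ your bound $(M-1)K\rho^{d^*/2}$ diverges and the argument yields nothing. Strictly speaking, your proof establishes the theorem only under an added assumption --- either an iterated limit ($N\to\infty$ for each fixed $K$, then $K\to\infty$ for the quantization error), which is what the paper implicitly uses, or the rate condition $d^*(N)/\ln K\to\infty$ that you isolated. This is not a defect relative to the paper, whose own proof is silent on the joint regime; surfacing that hidden assumption is in fact the main value your write-up adds. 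Two smaller caveats: the union bound requires $\epsilon_k\le 1$ (harmless, since otherwise the bound in \eqref{asym} is vacuous), and for $\rho=4q^*(1-q^*)<1$ to hold uniformly as $K$ grows you need $q_{max}^k$ bounded away from $\tfrac{1}{2}$ uniformly in $k$, not merely $q_{max}^k<\tfrac{1}{2}$ level by level --- the paper's hypothesis carries the same unstated strengthening.
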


\textit{Remark:} Note that, while for distributed classification, we have shown that the proposed scheme is asymptotically optimal if either $N$ or $K$ tend to infinity, for the distributed estimation case, we have proved that the scheme is asymptotically optimal when both $N$ and $K$ tend to infinity.

Next, we address the remaining aspect of the scheme which is the discretization of the continuous parameter space to perform estimation as iterative classification.

\subsection{Optimal Splitting of the Parameter Space}
\label{sec:split}
As mentioned before, the scheme splits the parameter space $\Theta$ into $M^K$ regions. Therefore, the MSE between the true parameter value $\theta$ and the FC's estimate $\hat{\theta}$ is affected by two factors: the quantization of the continuous region $\Theta$ into $M^K$ discrete points and the probability of misclassifying the region where the true parameter belongs. In Section~\ref{sec:a-optimalest}, we showed that the probability of misclassification can be made to tend to zero by using a large sensor network. Therefore, in order to minimize the MSE, we need to minimize the error due to the quantization of $\Theta$ into $M^K$ points. This optimal splitting depends on the prior pdf $p_\theta(\cdot)$ and can be determined by using ideas from rate distortion theory \cite{cover_inftheory}. As mentioned in \cite{cover_inftheory}, the optimal regions for quantization are given by Voronoi regions and the reconstruction points should minimize the conditional expected distortion over their respective assignments. One of the most popular algorithms used to determine these regions is the Llyod-Max algorithm \cite{lloyd,max}. This algorithm is iterative where we start with an initial set of reconstruction points which are typically chosen at random. It then repeatedly executes the following steps until convergence:
\begin{itemize}
\item Compute the optimal set of reconstruction regions (Voronoi regions) and 
\item Find the set of optimal reconstruction points for these regions (centroid of the Voronoi regions). 
\end{itemize}
In this paper, we use this algorithm which is performed offline and, therefore, is not a computational issue.

\subsection{Simulation Results}
\label{sec:simest}
In this section, we provide simulation results to evaluate the performance of the proposed scheme. As before, consider a tree network $T (3, 7)$ consisting of a total $N_{total} = 400$ nodes, including the FC. The observation at each node is Gaussian distributed with unknown mean $\theta$ and variance $\sigma^2 = 1$. This unknown parameter $\theta$ is uniformly distributed in $(0, \theta_{max} )$ where the region size is varied by varying the maximum value $\theta_{max}$. At each level, the nodes perform an $M$-ary classification where $M = 4$. Therefore, there are a total of $M^K = 4^3 = 64$ possible estimates of $\theta$. Since the parameter is uniformly distributed, the optimal splitting is uniform quantization into $M^K$ regions with the mid-points of the regions as the corresponding representation. Due to the complexity in designing the optimal matrix of size $4\times343$ for transmission at level $3$ (due to collaboration, each node at level $2$ has data of all nodes at level $3$), we employ a sub-optimal approach by concatenating the optimal code matrix of size $4\times7$. For level $3$, it is concatenated $49$ times, and for transmission at level $2$, it is concatenated $7$ times. The smaller code matrix of size $4\times7$ is designed using the simulated annealing approach.

\begin{figure}[t]
\centering
\includegraphics[width=3in,height=!]{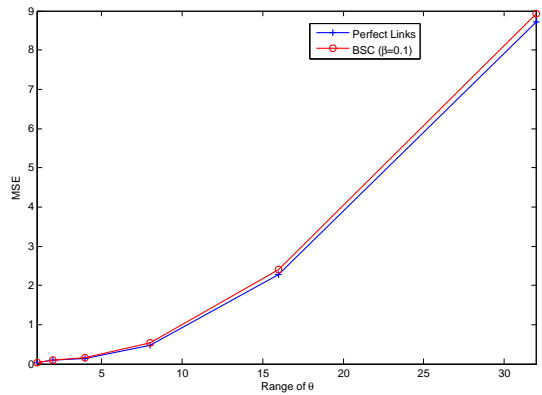}\vspace{-0.25in}
\caption{MSE as a function of the range of $\theta$}
\label{fig:MSE}
\end{figure}

In Figure \ref{fig:MSE}, we plot the mean square error (MSE) between the true value of $\theta$ and its estimate $\hat{\theta}$ at the FC\footnote{As discussed before, this estimate is one of the $M^K$ discrete points representing the quantized regions (centroids of the Voronoi regions, please see Section~\ref{sec:split}).} as a function of $\theta_{max}$. This value of MSE is empirically found by performing $N_{mc} = 5000$ Monte-Carlo runs. As we can observe, the performance of the scheme gets worse with increasing region size. This is because, when the range of $\theta$ is increased while the total number of possible estimates remains fixed, the error due to quantization increases. Since the proposed scheme is based on error-correcting codes, it can tolerate some errors in data. As mentioned before, these errors could be due to various reasons \cite{VempatyHVV14}. We have also simulated the case when the links between the levels are modeled as binary symmetric channels with crossover probability $\beta = 0.1$. As shown in Figure~\ref{fig:MSE}, the proposed scheme is quite robust to the presence of imperfect data arising due to non-ideal channels modeled as binary symmetric channels. As alluded to before, this robustness in performance is due to the use of error-correcting codes.

\section{Conclusion}
\label{sec:disc}
In this paper, we considered the general framework of distributed inference problem in tree networks. We proposed an analytically tractable scheme to solve these problems and proved the asymptotic optimality of the proposed schemes. For the classification problem, when the number of hypotheses is $M=2$, the proposed scheme is a majority-vote scheme for distributed detection in tree networks. Also, note that since the proposed scheme uses error-correcting codes, it works well even in scenarios with unreliable data \cite{VempatyHVV14}. It should be pointed out that the proposed scheme is not limited to wireless sensor networks, although the application of wireless sensor networks has been considered in this paper. The DCFECC scheme has been found to be applicable to a number of other applications including the paradigm of crowdsourcing \cite{VempatyVV2014}. We believe that one can use these results to address several other applications involving tree structures.

\bibliographystyle{IEEEtran}
\bibliography{Conf,Book,Journal}
\end{document}